\documentclass[DIV=classic,11pt,a4paper]{article}

\usepackage[latin1]{inputenc}
\usepackage[ngerman,english]{babel}
\usepackage{lipsum}
\usepackage{amsmath,amssymb,amsthm}
\usepackage{graphicx}
\usepackage{caption}
\usepackage{subcaption}
\usepackage{comment}
\usepackage{
           ,bbm
           ,enumerate
           ,mathtools
           }
\usepackage[round,authoryear]{natbib}
\usepackage[usenames,dvipsnames]{color}
\usepackage[left=2.5cm,top=3.5cm,right=1.7cm,bottom=3cm]{geometry}
\usepackage{MnSymbol}
\parindent 0mm


%
%
%
%

\newtheorem{theorem}{Theorem}[section]
\newtheorem{proposition}[theorem]{Proposition}
\newtheorem{corollary}[theorem]{Corollary}
\newtheorem{definition}[theorem]{Definition}
\newtheorem{lemma}[theorem]{Lemma}
\newtheorem{remark}[theorem]{Remark}
\newtheorem{example}[theorem]{Example}


\definecolor{darkgreen}{rgb}{0.00,0.40,0.00}

\numberwithin{equation}{section}

\newcommand{\1}{\mathbbm{1}}

\newcommand{\IE}{\mathbb{E}}

\newcommand{\IN}{\mathbb{N}}

\newcommand{\IP}{\mathbb{P}}

\newcommand{\IR}{\mathbb{R}}
\newcommand{\R}{\mathbb{R}}

\newcommand{\al}{\alpha}
\newcommand{\si}{\sigma}
\newcommand{\wh}{\widehat}
\newcommand{\ov}{\overline}

\newcommand{\cF}{\mathcal{F}}

\newcommand{\cP}{\mathcal{P}}
\newcommand{\cQ}{\mathcal{Q}}

\newcommand{\ud}{\mathrm{d}}

\newcommand{\udu}{\mathrm{d}u}

\newcommand{\essinf}{\mathrm{essinf}}

\newcommand{\var}[1]{\mathrm{VaR}^\alpha_{#1}}
\newcommand{\varu}[1]{\mathrm{VaR}^u_{#1}}
\newcommand{\avar}[1]{\mathrm{AVaR}^\alpha_{#1}}
\newcommand{\tildevar}[1]{\widetilde{\mathrm{VaR}}^\alpha_{#1}}
\newcommand{\tildeavar}[1]{\widetilde{\mathrm{AVaR}}^\alpha_{#1}}
\newcommand{\avarlow}[1]{\underline{\mathrm{AVaR}}^\alpha_{#1}}
\newcommand{\avarup}[1]{\overline{\mathrm{AVaR}}^\alpha_{#1}}

\newcommand{\GARCH}{\rm{GARCH}}

\def\balign#1\ealign{\begin{align}#1\end{align}}
\def\baligns#1\ealigns{\begin{align*}#1\end{align*}}


\newcommand{\halmos}{\quad\hfill\mbox{$\Box$}}

\allowdisplaybreaks[4]

\hyphenation{pro-ba-bi-li-ty}

\title{\bf Time-consistency of risk measures with GARCH volatilities and their estimation}

\author{Claudia Kl\"uppelberg\thanks{ Technische Universit\"at M\"unchen, Zentrum Mathematik, Boltzmannstra\ss e 3, 85748 Garching, Germany, \textsf{j.zhang@tum.de\,,\,cklu@tum.de} }
 \and
Jianing Zhang\footnotemark[1]
} 

\begin{document}
\selectlanguage{english}
\maketitle

\begin{abstract}
\noindent In this paper we study time-consistent risk measures for returns that are given by a GARCH$(1,1)$ model. We present a construction of risk measures based on their static counterparts that overcomes the lack of time-consistency. We then study in detail our construction for the risk measures Value-at-Risk (VaR) and Average Value-at-Risk (AVaR). While in the VaR case we can derive an analytical formula for its time-consistent counterpart, in the AVaR case we derive lower and upper bounds to its time-consistent version. Furthermore, we incorporate techniques from Extreme Value Theory (EVT) to allow for a more tail-geared statistical analysis of the corresponding risk measures. We conclude with an application of our results to a data set of stock prices. 
\end{abstract}

{\bf 2010 AMS subject classifications:} 
60G70, 91B30, 91G80, 91G70

\smallskip

{\bf 2010 JEL classification:} C02, C22, C58, G17, G32

\smallskip

{\bf Key words and phrases:} dynamic risk measure, time-consistency, GARCH$(1,1)$, Extreme Value Theory, Value-at-Risk, Average Value-at-Risk, Expected Shortfall, Generalized Pareto distribution, aggregate returns.
\medskip

\section{Introduction}\label{sec:intro}
In the wake of the financial crisis risk management constitutes a constant active field that attracts both mathematical research and quantitative requirements for the practical implementation. Most financial institutions need to abide with the Basel II/III accords that prescribe certain risk management rules to be applied to internal risk control and that are under periodic regulatory supervision. Over the last two decades the key notion of risk management arose in the form of a risk measure referred to as Value-at-Risk (VaR). Simply put, VaR determines the risk capital of a financial institution as the quantile of a profit-and-loss distribution with respect to some prescribed (either by regulation or by internal rules) time horizon and confidence level. An axiomatic approach to the field of risk measures is given by \citet{artzneretal} in which the notion of the \emph{coherent} risk measure is introduced and where it has been realized that VaR does not always satisfy the property of coherence. \citet{
artzneretal} introduce a risk measure that amends the lack of coherence that is nowadays known as the Average-Value-at-Risk (AVaR). 
An extension to \emph{convex} risk measures is given in \citet{foellmerschied_convex}, which integrates existing notions of risk into the mathematical framework of convex dual theory and, hence, allows for deep and powerful dual characterizations. In order to account for the dynamic stochastic evolution of profit-and-loss positions the static risk measurement has been extended to the class of \emph{dynamic risk measures}, which treats the risk measure not only as a (nonlinear) expectation but as a stochastic process, see e.g. \citet{detlefsen} and \citet{riedel} for the extension to the dynamic setting by means of convex dual theory. 
It has been realized in this dynamic framework that most existing static risk measures do not transfer in a straightforward manner into processes without violating the required property of \emph{time-consistency}. 
A time-consistent dynamic risk measure secures the consistent behavior of a risk measure that, if a portfolio is riskier than another portfolio at some future time, then this portfolio has been riskier that the other portfolio at any time before. 
The literature on time-consistency of risk measures is diverse and rich as different mathematical viewpoints can be adopted to prevent the consistency property. 
An incomplete chronicle of research done in the field of time-consistent risk measures includes \citet{pengnonlinear}, \citet{riedel}, \citet{detlefsen}, \citet{Weber}, \citet{foellmerpenner}, \citet{RoordaSchumacher}, \citet{penner}, \citet{bion}, and \citet{BCP}. A major result from the research on time-consistency reveals that in the class of law-invariant risk measures there is only one risk measure that, upon transfer into a time-dynamic process setting, supports time-consistency, namely the entropic risk measure (cf. \citet{foellmerknispel}).

\medskip

In parallel to the aforementioned theoretical work statistical models and methods have been developed to calibrate and integrate risk measures to real world data. As the industry standard VaR and its coherent counterpart AVaR are law-invariant risk measures, the main goal for an implementation of (A)VaR is to find a good estimate of the profit-and-loss distribution in the relevant region. 
In this field, the major class of estimation methods comprise the historical simulation method, methods based on Gaussian distribution assumptions and methods based on Extreme Value Theory (EVT). 
We refer to \citet{mfe}, in particular Chapter~2 and Chapter~7, for a detailed account and references to methods of profit-and-loss distribution estimation. 
More background on extreme value theory can be found in the monograph \cite{ekm}.
\citet{mcneil_frey} propose an implementation of VaR and AVaR that is based on an estimation of the log-returns distribution using a combination of  a GARCH$(1,1)$ model fit and an EVT approach for the residuals.
Their method proceeds in a two-step scheme: first, the GARCH$(1,1)$ model mimics the inherent stochastic volatility of financial time series, and the GARCH parameters are estimated by a pseudo maximum likelihod method.
Second, they adopt a Peaks-over-Threshold (POT) approach to the residuals and only consider those residuals that exceed a critical value. 
The POT  method justifies fitting a Generalized Pareto distribution (GPD) by means of a maximum likelihood method  (e.g. \cite{ekm}, Section~3.4 and Section~6.5) 
It is also in accord with the typically high confidence levels that are imposed on (A)VaR to zoom into the extreme branch of losses. 
Applying the POT method to the residuals rather than directly to the log-returns has the advantage that the fitting procedure to the extremes only needs to be applied once due to the white noise property of the residuals. 
Using these two steps, \citet{mcneil_frey} succeed to estimate (A)VaR by fitting a distribution that adequately accounts for the extremes in the tail and under mild conditions allows for closed form formulas for VaR and AVaR. 

\medskip

The goal of our paper is to incorporate dynamic time-consistency for VaR and AVaR. We investigate the extension of static risk measures to dynamic counterparts that satisfy time-consistency. A key property to succeed in this transfer is the dynamic programming principle, see \citet{cheridito_stadje}, \citet{ck}. 

\medskip

The two-step estimation scheme from \citet{mcneil_frey}  using GARCH$(1,1)$ and EVT 
allows us to derive a closed form expression for the dynamic time-consistent VaR that is easily implemented using the estimated GPD and the GARCH parameters. 
For AVaR however, such a closed form expression cannot be obtained and
we derive closed form lower and upper approximations to AVaR.
On top of being more conservative than their static counterparts, the dynamic time-consistent VaR offers the benefit that the risk measurement of aggregated losses, which in e.g. \citet{mcneil_frey} have to be estimated by simulation methods, can now be estimated in a (semi-)closed way by simply aggregating the VaRs of the single positions at different future time points. 

\medskip

The paper is structured as follows. In Section~\ref{sec:risk_measures} we present preliminaries on dynamic risk measures along with the dynamic programming principle characterization.
 Moreover, we introduce the GARCH$(1,1)$ loss model, which establishes the  model framework for the entire paper. 
In Section~\ref{sec:var} we apply the new methodology from the previous section to derive a closed form expression for the time-consistent VaR and investigate its properties concerning the evolution over time and prove the linearization of aggregated losses. 
Section~\ref{sec:avar} is devoted to the study of AVaR. Since a closed form expression for time-consistent AVaR is not possible, as an alternative, we derive closed form expressions for pragmatic bounds to AVaR and study the properties as in the previous section. The proofs of the results of Sections~3 and~4 are postponed to the Appendix. In the last Section~\ref{sec:evt} we give a rehash on the part of extreme value theory that is relevant for our purpose, and apply our results to a data set of stock prices. 

\section{Conditional risk measures}\label{sec:risk_measures}

Given a probability space $(\Omega,\cF,\IP)$ we consider a filtration $(\cF_t)_{t=0}^T$ where $T\in\IN$. We denote by $L^0(\cF_t)$ with $t\in\{0,\ldots,T\}$ the set of all $\cF_t$-measurable random variables $X: \Omega \to \IR$. In this paper, the space $L^0(\cF_T)$ represents the space of all financial positions for which we need a risk assessment. Typically, we will be interested in losses, i.e. the negatives of log-returns of financial data. 

Since conditional risk measures are random variables, all properties, equalities and inequalities below hold almost surely with respect to $\IP$, and we assume this throughout without making extra mention of it.

\begin{definition}\label{def:cond_rm}
For $t\in\{0,\ldots,T\}$ a family of mappings $( \phi_t )_{t=0}^T$ with $\phi_t: L^0(\cF_T) \to L^0(\cF_t)$ is a \emph{dynamic monetary risk measure} if it satisfies the following properties:
\begin{itemize}
\item[(i)] Normalization: $\phi_t(0)=0$ for $t=0,\ldots,T$;
\item[(ii)] Monotonicity: $\phi_t(X) \geq  \phi_t(Y)$ for all $X,Y\in L^0(\cF_T)$ such that $X\geq Y$, for $t=0,\ldots,T$;
\item[(iii)] Translation invariance: $\phi_t(X+m) = \phi_t(X) + m$ for all $X\in L^0(\cF_T)$ and $m\in L^0(\cF_t)$, for $t=0,\ldots,T$.
\end{itemize}
\end{definition}

If $L^0(\cF_T)$ represents the space of all profit and loss variables, the above definition leads to the notion of a \emph{dynamic monetary utility function}, see Definition 2.1 in \citet{ck}. If a dynamic monetary risk measure $\phi$ satisfies in addition to Definition \ref{def:cond_rm} (i)-(iii) 
\begin{itemize}
\item Positive homogeneity: $\phi_t(\lambda X) = \lambda \phi_t(X)$ for all $X \in L^0(\cF_T)$ and   $\lambda >0$, for $t=0,\ldots,T$; 
\item Subadditivity: $\phi_t(X+Y) \leq \phi_t(X) + \phi_t(Y)$ for all $X,Y \in L^0(\cF_T)$, for $t=0,\ldots,T$,
\end{itemize}
then we say that $\phi$ is a \emph{coherent} (dynamic monetary) risk measure.

\begin{definition}\label{def:time_con}
A dynamic monetary risk measure $\phi:= (\phi_t  )_{t=0}^T$ is \emph{time-consistent} if 
\baligns
\phi_{t+1}(X) \geq \phi_{t+1}(Y) ~\text{ implies  } ~ \phi_{t}(X) \geq \phi_{t}(Y), 
\ealigns
for all $X,Y \in L^0(\cF_T)$, for $t=0,\ldots,T-1$.
\end{definition} 

The following useful characterization of time-consistency can be found in \citet{ck}.

\begin{proposition}\label{prop:dpp}
A dynamic monetary risk measure $( \phi_t )_{t=0}^T$ is time-consistent if and only if it satisfies the Bellman principle
\balign\label{eq:dpp}
\phi_t(X) = \phi_t\big( \phi_{t+1}(X) \big)
\ealign
for all $X\in L^0(\cF_T)$, and $t=0,\ldots,T-1$. 
\end{proposition}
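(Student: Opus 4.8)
The plan is to prove the two implications separately, with the forward direction resting on one elementary observation that I would record first: the map $\phi_{t+1}$ acts as the identity on $L^0(\cF_{t+1})$. Indeed, since the filtration is increasing we have $L^0(\cF_{t+1})\subseteq L^0(\cF_T)$, so $\phi_{t+1}$ is defined on such variables, and for $Z\in L^0(\cF_{t+1})$ properties (i) and (iii) of Definition~\ref{def:cond_rm} give $\phi_{t+1}(Z)=\phi_{t+1}(0+Z)=\phi_{t+1}(0)+Z=Z$.

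For the direction ``time-consistent $\Rightarrow$ Bellman'', I would fix $X\in L^0(\cF_T)$ and set $Y:=\phi_{t+1}(X)\in L^0(\cF_{t+1})$. By the preliminary observation $\phi_{t+1}(Y)=Y=\phi_{t+1}(X)$, so in particular $\phi_{t+1}(X)\ge\phi_{t+1}(Y)$ and $\phi_{t+1}(Y)\ge\phi_{t+1}(X)$. Applying the time-consistency implication of Definition~\ref{def:time_con} to each of these two inequalities yields $\phi_t(X)\ge\phi_t(Y)$ and $\phi_t(Y)\ge\phi_t(X)$, hence $\phi_t(X)=\phi_t(Y)=\phi_t\big(\phi_{t+1}(X)\big)$, which is \eqref{eq:dpp}.

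For the converse, I would assume \eqref{eq:dpp} and suppose $\phi_{t+1}(X)\ge\phi_{t+1}(Y)$ for some $X,Y\in L^0(\cF_T)$. Since $\phi_{t+1}(X),\phi_{t+1}(Y)\in L^0(\cF_{t+1})\subseteq L^0(\cF_T)$, the monotonicity property (ii) of $\phi_t$ applies and gives $\phi_t\big(\phi_{t+1}(X)\big)\ge\phi_t\big(\phi_{t+1}(Y)\big)$; combined with \eqref{eq:dpp} this reads $\phi_t(X)\ge\phi_t(Y)$, which is exactly the time-consistency condition.

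The argument is short and uses only normalization, translation invariance and monotonicity; no further structure (subadditivity, positive homogeneity, law invariance) is needed, so I would not invoke any of it. The only point that deserves care is the one-directional nature of the implication in Definition~\ref{def:time_con}: to extract the \emph{equality} in \eqref{eq:dpp} one must feed the time-consistency condition with both $\phi_{t+1}(X)\ge\phi_{t+1}(Y)$ and the reversed inequality, which is why the preliminary identity $\phi_{t+1}(Y)=\phi_{t+1}(X)$ (rather than a mere inequality) is the crux. As stipulated in the text, all (in)equalities are understood $\IP$-almost surely.
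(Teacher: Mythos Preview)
Your proof is correct and is the standard argument for this equivalence; the key observation that $\phi_{t+1}$ restricts to the identity on $L^0(\cF_{t+1})$ via normalization plus translation invariance is exactly what is needed, and you handle both implications cleanly. Note, however, that the paper does not supply its own proof of this proposition: it is stated as a known characterization and attributed to \citet{ck}, so there is no in-paper argument to compare against.
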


It has been noted in \citet{cheridito_stadje} and \citet{ck} that there is another way to construct time-consistent dynamic risk measures: let $(\rho_t)_{t=0}^{T-1}$ be an arbitrary dynamic monetary risk measure
\baligns
\rho_t: L^0(\cF_T) \to L^0(\cF_t), \quad t=0,\ldots,T-1,
\ealigns
then the backward iteration 
\balign\label{eq:backwd_ind}
\phi_T(X):= X, \quad \phi_t(X):= \rho_t\big(\phi_{t+1}(X)\big), \quad t=0,\ldots,T-1,
\ealign
defines a process $(\phi_t)_{t=0}^T$ which by definition is a time-consistent dynamic risk measure.
The following property is a straightforward consequence of the construction of $(\phi_t)_{t=0}^T$.

\begin{corollary}\label{cor:iterate}
For $X\in L^0(\cF_T)$ we have for $t=0,\cdots,T-1$
\balign
\phi_t(X) = \Big( \rho_t \circ \rho_{t+1} \circ \cdots \circ \rho_{T-1} \Big) (X). 
\ealign
\end{corollary}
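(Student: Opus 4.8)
The plan is a straightforward downward induction on $t$, starting from $t=T-1$ and using the two defining relations in \eqref{eq:backwd_ind}, namely $\phi_T(X)=X$ and $\phi_t(X)=\rho_t\big(\phi_{t+1}(X)\big)$. Before running the induction I would record the one structural point that makes the iterated composition meaningful: each $\rho_{t+1}$ outputs an element of $L^0(\cF_{t+1})$, and since the filtration is increasing we have $\cF_{t+1}\subseteq\cF_T$, hence $L^0(\cF_{t+1})\subseteq L^0(\cF_T)$; therefore $\rho_t$ may legitimately be applied to $\phi_{t+1}(X)$, and more generally the composition $\rho_t\circ\rho_{t+1}\circ\cdots\circ\rho_{T-1}$ is a well-defined map $L^0(\cF_T)\to L^0(\cF_t)$.

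For the base case $t=T-1$, the definition gives $\phi_{T-1}(X)=\rho_{T-1}\big(\phi_T(X)\big)=\rho_{T-1}(X)$, which is exactly the claimed formula, the composition on the right-hand side consisting of the single map $\rho_{T-1}$. For the inductive step, fix $t\in\{0,\ldots,T-2\}$ and assume the identity holds at level $t+1$, i.e.
\balign
\phi_{t+1}(X)=\big(\rho_{t+1}\circ\rho_{t+2}\circ\cdots\circ\rho_{T-1}\big)(X).
\ealign
Applying $\rho_t$ to both sides and using $\phi_t(X)=\rho_t\big(\phi_{t+1}(X)\big)$ yields
\balign
\phi_t(X)=\rho_t\Big(\big(\rho_{t+1}\circ\cdots\circ\rho_{T-1}\big)(X)\Big)=\big(\rho_t\circ\rho_{t+1}\circ\cdots\circ\rho_{T-1}\big)(X),
\ealign
which closes the induction and proves the corollary for all $t=0,\ldots,T-1$.

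There is essentially no obstacle here: the statement is a direct unwinding of the recursive definition \eqref{eq:backwd_ind}, and the only thing worth being careful about is the domain/codomain bookkeeping noted in the first paragraph, which guarantees that the composition is not merely a formal expression. One could alternatively phrase the whole argument as a single line by expanding the recursion $\phi_t=\rho_t\circ\phi_{t+1}=\rho_t\circ\rho_{t+1}\circ\phi_{t+2}=\cdots$ down to $\phi_T=\mathrm{id}$, but the induction above makes the termination explicit and is the cleaner write-up.
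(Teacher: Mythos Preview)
Your proof is correct and is precisely the straightforward unwinding of the recursion \eqref{eq:backwd_ind} that the paper has in mind; the paper itself does not spell out a proof, merely stating that the corollary is ``a straightforward consequence of the construction of $(\phi_t)_{t=0}^T$''. Your added remark on the domain/codomain bookkeeping is a nice touch that makes the composition rigorously well-defined.
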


For a coherent risk measure $\phi$, its subadditivity property implies that for any fixed $t\in\{0,\ldots,T\}$ and $m\in \IN$ such that $t+m\le T$ and any $X_{t+k} \in L^0(\cF_{t+k})$ for $k=1,\ldots,m$ we have
\balign\label{eq:linearize}
\phi_t\big( \sum_{k=1}^m X_{t+k} \big) \leq \sum_{k=1}^m \phi_t( X_{t+k} ).
\ealign

We construct our time-consistent dynamic risk measures by backwards iteration.

\subsection{The GARCH(1,1) model for loss positions}\label{sec:model_choice}

Recall that we are interested in the risk assessment of losses. 
The focus of this paper is on a particular class of loss processes $(L_t)_{t=0}^T$: its dynamics is governed by a GARCH(1,1) process and typically represent (negative) log-returns. 
It holds that $(L_t)_{t=1}^T$ satisfies
\begin{equation}\label{eq:garch_loss}
\begin{aligned}
&L_t = \sigma_t Z_t,\\
&\sigma_t^2 = a_0 + a_1 L_{t-1}^2 + b \sigma_{t-1}^2,
\end{aligned}
\end{equation}
where $a_0,a_1,b>0$ are the model parameters, $\sigma_0$ and $L_0$ are $\cF_0$-measurable initial random variables, and $(Z_t)_{t=1}^T$ is a strict white noise process (independently identically distributed with zero mean and unit variance). 
Note also that by \eqref{eq:garch_loss} $\sigma_t$ is measurable with respect to $\cF_{t-1}$ for every $t=1,\ldots,T$. 

\medskip

We denote by $F_Z:\IR \to [0,1]$ and $F_Z^{-1}:[0,1] \to \IR$ the  distribution function and the left-continuous quantile function of each $Z_t$,  respectively; i.e.,
\balign\label{leftinv}
F_Z(z) = \IP( Z_t \leq z ), \quad F^{-1}_Z(\alpha) = \inf\{ x\in\IR:  F_Z(x) \ge \alpha \}, \quad \alpha \in (0,1), \quad t=0,\ldots,T. 
\ealign

For properties of the quantile function $F_Z^{-1}$ we refer to \cite{Resnick}, Section~0.2, or \cite{ekm}, Proposition~A1.6.
 
We assume that $F_Z$ is strictly increasing, thus $F_Z^{-1}$ is continuous, and that the right endpoint of $Z_t$ is infinite; i.e.,
\[
x_F = \inf\{x\in\IR: F_Z(x) = 1\} = \infty.
\]
If necessary we identify $F^{-1}_Z(1)$ with $x_F=\infty$.
Since $Z$ has infinite right endpoint, and $\alpha$ is close to 1, $F_Z^{-1}(\alpha)$ is as a rule positive.
We shall also need the quantile function of $Z^2$ and note that for $Z$ symmetric, $F^{-1}_{Z^2}(\alpha)= F^{-1}_{Z}(\frac12(\alpha+1))^2$. 
Note further that $\alpha\le \frac12(\alpha+1)$ for $\alpha\in(0,1)$, hence (for $F^{-1}_{Z}(\alpha)>1$)
\balign\label{comp}
F^{-1}_{Z}(\alpha)\le F^{-1}_{Z}(\alpha)^2 \le F^{-1}_{Z}(\frac12(\alpha+1))^2 = F^{-1}_{Z^2}(\alpha).
\ealign

We summarize the assumptions which we will assume throughout the paper. \\

{\bf Assumptions A: }\quad 
We assume that $F_Z$ is strictly increasing with support $\R$ and that $F^{-1}_{Z}(\alpha)>0$.
For simplicity, we also assume that $Z$ is symmetric.\\

Since we often work with distribution tails, we note that $F^{-1}_Z$ can also be represented as
\balign\label{varZ}
 F^{-1}_Z(\alpha) = \inf\{ x\in\IR:  P(Z_t>x) \le 1-\alpha \}, \quad \alpha \in (0,1), \quad t=0,\ldots,T.
\ealign

\section{Conditional time-consistent Value-at-Risk}\label{sec:var}

In this section we study Value-at-Risk (VaR) in the framework of dynamic time-consistent risk measures. 
One typically considers $L \in L^0(\cF_T)$ which represents a possibly large loss position, for which the probability of $L$ exceeding a loss threshold $m>0$ should be bounded by a small probability $1-\alpha$, i.e. $\alpha$ is typically close to $1$. 
The smallest loss threshold $m$ which satisfies this bound is the $\var{}$. 
Several versions of the (conditional) VaR definition can be found in the literature.
In analogy to \eqref{varZ} we work throughout with the following, which caters best to the purpose of the treatments in this paper.  

\begin{definition}\label{def:var}
Given a loss position $L\in L^0(\cF_T)$ the \emph{Value-at-Risk at level $\alpha \in (0,1)$ at time $t\in\{0,\ldots,T\}$ for $L$} is defined by
\balign\label{eq:var}
\var{t}(L) := \essinf\big\{m \in L^0(\cF_t): ~ \IP( L \le m  ~|~\cF_t) \geq \alpha \big\}, 
\ealign
\end{definition}

\subsection{Time-consistent VaR for single day losses}\label{sec:var_single}

We start this section with the following example, which is the core object of interest in \citet{mcneil_frey}.

\begin{example}\label{ex:1_day}\rm
For $t=0,\ldots,T-1$ let $L_{t+1}$ be given by \eqref{eq:garch_loss}.  
Then $\var{t}(L_{t+1})$  is the \emph{1-day-ahead-\text{VaR}}, which can be computed straightforwardly as
\baligns
\var{t}(L_{t+1}) &= \essinf\big\{m \in L^0(\cF_t): ~ \IP( \sigma_{t+1}Z_{t+1} \le m  ~|~\cF_t) \geq \alpha \big\}.
\ealigns
Since $\sigma_{t+1}$ is $\cF_t$-measurable, we also have that $\tilde{m} := m/\sigma_{t+1}$ is $\cF_t$-measurable. Using the independence between $Z_{t+1}$ and $\cF_t$, and also \eqref{varZ}, we can continue 
\balign\label{varL}
\var{t}(L_{t+1}) &= \sigma_{t+1} \, \essinf\big\{ \tilde{m} \in L^0_+(\cF_t): ~ \IP( Z_{t+1} \le \tilde{m} ~|~\cF_t) \geq \alpha \big\}\nonumber\\
 &= \sigma_{t+1} \, \inf\big\{ \tilde{m} \in\IR_+: ~ \IP( Z_{t+1} \le \tilde{m}) \ge \alpha \big\}\nonumber\\
&= \sigma_{t+1} \, F_Z^{-1}(\alpha). 
\ealign
Moreover, since $L_t\in\cF_t$,
\balign\label{2sum}
\var{t}(L_t+L_{t+1}) = L_t +\var{t}(L_{t+1}) = L_t + \sigma_{t+1} F^{-1}_Z(\al).
\ealign
\halmos
\end{example}

There are examples showing that Value-at-Risk from Definition~\ref{def:var} is not time-consistent (e.g. \citet{cheridito_stadje} or \citet[Example 11.13]{foellmerschied}). 
As the GARCH$(1,1)$ model \eqref{eq:garch_loss} is defined by an iteration, one could hope that for this specific model $\var{}$ is time-consistent.
However, this is not true and we provide a counterexample, which makes use of Proposition~\ref{prop:dpp}.

\begin{example}\label{ex:counter_ex}\rm
In order to see why in the framework of GARCH$(1,1)$ losses VaR cannot be time-consistent, recall that according to Proposition~\ref{prop:dpp} $\var{}$ is time-consistent if and only if it satisfies the dynamic programming principle
\[
\var{t} = \var{t} \circ \var{t+1}, \quad t=0,\ldots,T-1.
\]
For $t\in\{0,\ldots,T-2\}$, by \eqref{varL}, we have $\var{t+1}(L_{t+2}) = \sigma_{t+2}F^{-1}_Z(\alpha)$ and, hence,
\[
\var{t} \big(\var{t+1}( L_{t+2} ) \big) = \var{t} ( \sigma_{t+2}F^{-1}_Z(\alpha) ).
\]
We compute $\var{t}( L_{t+2})$ and $\var{t} ( \sigma_{t+2}F^{-1}_Z(\alpha) )$ for the GARCH$(1,1)$ model: 
\baligns
\var{t} ( \sigma_{t+2}F^{-1}_Z(\alpha) ) = m^\ast &= \essinf\{m\in L^0(\cF_t): \IP\big(\sigma_{t+2}F^{-1}_Z(\alpha)\le m \mid \cF_t\big)\ge \al\}\\
 &= \essinf\{m\in L^0(\cF_t): 
 \IP\big( \sqrt{a_0+\sigma_{t+1}^2(a_1 Z_{t+1}^2+b)} F^{-1}_Z(\alpha)  \le {m}\mid \cF_t\big)\ge \al\}.
\ealigns
Since the function $\sqrt{(a_0+\sigma_{t+1}^2(a_1 Z_{t+1}^2+b)){F^{-1}_Z(\alpha)^2}}$ is strictly increasing in $Z_{t+1}^2$ and $\cF_t$-measurable, we obtain
\balign\label{var1}
m^\ast = \sqrt{(a_0 + \sigma_{t+1}^2 (a_1F^{-1}_{Z^2}(\alpha)  + b ))} F^{-1}_Z(\alpha).
\ealign
Next we compute 
\baligns
 \var{t}( L_{t+2} ) = m^{\ast\ast} &= \essinf\{m\in L^0(\cF_t): \IP\big(\si_{t+2} Z_{t+2} \le m \mid \cF_t\big)\ge \al\}\\
 &= \essinf\{m\in L^0(\cF_t): \IP\big(\sqrt{ a_0+\sigma_{t+1}^2(a_1 Z_{t+1}^2+b)} Z_{t+2} \le m \mid \cF_t\big)\ge \al\}.
\ealigns
Now assume that $m^{\ast\ast} = m^{\ast}$ for all $\al\in(0,1)$.
We denote by $\IP_t$ the conditional probability with respect to $\cF_t$ and calculate
\baligns
\alpha &=  \IP_t\big(\sqrt{ a_0+\sigma_{t+1}^2(a_1 Z_{t+1}^2+b)}  \le  \sqrt{a_0 +\sigma_{t+1}^2
 (a_1F^{-1}_{Z^2}(\alpha)  + b )} \big) \\
&= 
  \IP_t\big(\sqrt{ a_0+\sigma_{t+1}^2(a_1 Z_{t+1}^2+b)} Z_{t+2} \le  \sqrt{a_0 +\sigma_{t+1}^2 (a_1F^{-1}_{Z^2}(\alpha)  + b )}   F^{-1}_Z(\alpha) \big) \\
&= 2 \int_{0}^{\infty}  \IP_t\big(\sqrt{ a_0+\sigma_{t+1}^2(a_1 Z_{t+1}^2+b)} z \le  \sqrt{a_0 +\sigma_{t+1}^2 (a_1F^{-1}_{Z^2}(\alpha)  + b )} F^{-1}_Z(\alpha) \big)  dF_Z(z) -1\\
& =  2 \int_{0}^{F^{-1}_{Z}(\alpha) }  \IP_t\big(\sqrt{ a_0+\sigma_{t+1}^2(a_1 Z_{t+1}^2+b)} z \le  \sqrt{a_0 +\sigma_{t+1}^2 (a_1F^{-1}_{Z^2}(\alpha)  + b )} F^{-1}_Z(\alpha)  \big)  dF_Z(z)\\
& + 2 \int_{F^{-1}_{Z}(\alpha) }^\infty  \IP_t\big(\sqrt{ a_0+\sigma_{t+1}^2(a_1 Z_{t+1}^2+b)} z \le  \sqrt{a_0 +\sigma_{t+1}^2 (a_1F^{-1}_{Z^2}(\alpha)  + b )} F^{-1}_Z(\alpha)  \big)  dF_Z(z) -1
\ealigns
Now note for the first integral that ${F^{-1}_Z(\alpha)/z}$ decreases in $z$ to 0 and has minimum 1 over the integral range.
This implies for the probability under the integral, that the left-hand random variable scaled by $z$ decreases with $z$ to $\alpha$. Moreover, since the support of $Z_{t+1}$ has infinite right endpoint, the second integral is positive. 
Hence, we estimate the right-hand side by
\baligns
 \ge 2 \alpha^2 + 2a -1,
\ealigns
where $a>0$. However, for $\alpha$ close to 1 we have $\alpha+1< 2\alpha^2 +2a$.
\halmos
\end{example}

\medskip

\citet{cheridito_stadje} propose to amend the time-inconsistency of VaR using the backward iteration \eqref{eq:backwd_ind}. This gives rise to the following definition.

\begin{definition}\label{def:consistent_var}
Given a loss position $L\in L^0(\cF_T)$ and $\var{}$ from Definition~\ref{def:var}.
Then the  \emph{time-consistent Value-at-Risk at level $\alpha \in (0,1)$ for $L$} is defined by
\balign\label{eq:consistent_var}
\tildevar{T}(L) := \var{T}(L) = L, \quad \tildevar{t}(L) := \var{t}\big( \tildevar{t+1}(L) \big), \quad t=0,\ldots,T-1.
\ealign
\end{definition}

In the notation of the construction from the recursion \eqref{eq:backwd_ind}, this corresponds to $\rho_t := \var{t}$ and $\phi_t := \tildevar{t}$. 
As a consequence of the construction of $(\tildevar{T}(X))_{t=0}^T$ we find for $L\in L^0(\cF_T)$
\balign\label{cor:var_iterate}
\tildevar{t}(L) = \Big( \var{t} \circ \var{t+1} \circ \cdots \circ \var{T-1} \Big) (L). 
\ealign

\medskip

The choice of the GARCH$(1,1)$ model \eqref{eq:garch_loss} entails the convenient feature that the $m$-day ahead VaR assessment allows for a closed form solution. More precisely, we can derive an analytical solution for the time $t$ risk assessment of the GARCH(1,1) loss at terminal time $T$ as follows (as usual we set $\sum_{k=0}^{-1} a_n=0$). The proof is given in Appendix~\ref{AppA}.
 
\begin{theorem}\label{prop:garch_var_terminal}
Let $(L_t)_{t=0}^T$ be the loss process given by the \GARCH(1,1) model \eqref{eq:garch_loss}. 
Then we have
\balign\label{eq:garch_var}
\tildevar{t}(L_T) = F_Z^{-1}(\alpha) \sqrt{ \cP^T_{t}\big(a_1 F_{Z^2}^{-1}(\alpha) + b \big) }, \quad t=0,\ldots,T-1, 
\ealign
where $\cP^T_t: \IR \to \IR$ is an $\cF_t$-measurable mapping given by
\balign\label{eq:poly}
\cP^T_t(x) := a_0 \sum_{k=0}^{T-t-2} x^k + \sigma_{t+1}^2 x^{T-t-1}, \quad t=0,\ldots,T-1.
\ealign
\end{theorem}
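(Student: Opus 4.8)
The plan is to establish \eqref{eq:garch_var} by backward induction on $t$, descending from $t=T-1$ to $t=0$ and using at each step the recursion $\tildevar{t}(L_T)=\var{t}\big(\tildevar{t+1}(L_T)\big)$ of Definition~\ref{def:consistent_var}. Abbreviate $x_\al:=a_1F_{Z^2}^{-1}(\al)+b$; under Assumptions~A we have $F_{Z^2}^{-1}(\al)>0$ (since $Z$ is symmetric, $F_{Z^2}(y)=2F_Z(\sqrt y)-1$ is strictly increasing on $(0,\infty)$ and equals $0$ at $y=0$), hence $x_\al>0$ because $a_1,b>0$. The only computational device needed is a \emph{substitution rule}: if $\psi:\IR_+\to\IR$ is continuous and strictly increasing with $\cF_t$-measurable coefficients and $Z_{t+1}$ is independent of $\cF_t$, then $\var{t}\big(\psi(Z_{t+1}^2)\big)=\psi\big(F_{Z^2}^{-1}(\al)\big)$. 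This is the conditional counterpart of the computation in Example~\ref{ex:1_day} leading to \eqref{varL}: $\psi$ maps $\IR_+$ bijectively onto $[\psi(0),\infty)$, so for $m\ge\psi(0)$ one has $\{\psi(Z_{t+1}^2)\le m\}=\{Z_{t+1}^2\le\psi^{-1}(m)\}$, whence by independence $\IP\big(\psi(Z_{t+1}^2)\le m\mid\cF_t\big)=F_{Z^2}\big(\psi^{-1}(m)\big)$, and this is $\ge\al$ precisely when $m\ge\psi\big(F_{Z^2}^{-1}(\al)\big)$; taking the essential infimum over $m\in L^0(\cF_t)$ gives the rule.

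For the base case $t=T-1$ the sum in \eqref{eq:poly} is empty, so $\cP^T_{T-1}(x)=\sigma_T^2$ and the right-hand side of \eqref{eq:garch_var} reads $F_Z^{-1}(\al)\sqrt{\sigma_T^2}=\sigma_TF_Z^{-1}(\al)$ (using $\sigma_T\ge0$), which is exactly $\var{T-1}(L_T)=\tildevar{T-1}(L_T)$ by \eqref{varL}.

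For the inductive step, assume \eqref{eq:garch_var} holds at $t+1$, i.e. $\tildevar{t+1}(L_T)=F_Z^{-1}(\al)\sqrt{\cP^T_{t+1}(x_\al)}$ with $\cP^T_{t+1}(x_\al)=a_0\sum_{k=0}^{T-t-3}x_\al^k+\sigma_{t+2}^2x_\al^{T-t-2}$. Insert the recursion \eqref{eq:garch_loss}, $\sigma_{t+2}^2=a_0+a_1L_{t+1}^2+b\sigma_{t+1}^2=a_0+\sigma_{t+1}^2\big(a_1Z_{t+1}^2+b\big)$, where $\sigma_{t+1}$ is $\cF_t$-measurable, and collect the terms carrying $Z_{t+1}^2$:
\baligns
\cP^T_{t+1}(x_\al)&=A_t+B_t\,Z_{t+1}^2,\\
A_t&:=a_0\sum_{k=0}^{T-t-3}x_\al^k+\big(a_0+b\sigma_{t+1}^2\big)x_\al^{T-t-2},\qquad B_t:=a_1\sigma_{t+1}^2\,x_\al^{T-t-2},
\ealigns
so that $A_t,B_t\in L^0(\cF_t)$ with $A_t,B_t>0$. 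Then $\tildevar{t+1}(L_T)=\psi(Z_{t+1}^2)$ with $\psi(y):=F_Z^{-1}(\al)\sqrt{A_t+B_ty}$ continuous and strictly increasing on $\IR_+$; applying $\var{t}$ and the substitution rule yields $\tildevar{t}(L_T)=\psi\big(F_{Z^2}^{-1}(\al)\big)=F_Z^{-1}(\al)\sqrt{A_t+B_tF_{Z^2}^{-1}(\al)}$. It remains to recognize $A_t+B_tF_{Z^2}^{-1}(\al)=\cP^T_t(x_\al)$: since $b+a_1F_{Z^2}^{-1}(\al)=x_\al$, the $x_\al^{T-t-2}$-terms combine to $\big(a_0+\sigma_{t+1}^2x_\al\big)x_\al^{T-t-2}=a_0x_\al^{T-t-2}+\sigma_{t+1}^2x_\al^{T-t-1}$, and absorbing the first summand into the geometric sum extends it to index $T-t-2$, giving $a_0\sum_{k=0}^{T-t-2}x_\al^k+\sigma_{t+1}^2x_\al^{T-t-1}=\cP^T_t(x_\al)$, which closes the induction.

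The step that genuinely needs care is the substitution rule inside the inductive step — the rigorous justification that, conditionally on $\cF_t$, the $\cF_t$-measurable $\sigma_{t+1}$ may be frozen and the conditional VaR of a continuous strictly increasing transform of the independent innovation $Z_{t+1}$ is obtained by evaluating that transform at the quantile $F_{Z^2}^{-1}(\al)$. This is precisely where Assumptions~A enter: strict monotonicity of $F_Z$ (hence of $F_{Z^2}$ on $(0,\infty)$) makes $\psi$ invertible so that level sets transform exactly, and positivity of $F_{Z^2}^{-1}(\al)$ and of $x_\al$ guarantees that $\psi$ is well defined and strictly increasing on all of $\IR_+$ with $A_t,B_t>0$. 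Everything else is the elementary bookkeeping with the geometric sum displayed above.
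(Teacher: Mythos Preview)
Your proof is correct and follows essentially the same approach as the paper: backward induction, using at each step the GARCH recursion $\sigma_{t+2}^2=a_0+\sigma_{t+1}^2(a_1Z_{t+1}^2+b)$ to expose the dependence on the single innovation $Z_{t+1}^2$, then computing the conditional VaR by exploiting the $\cF_t$-measurability of $\sigma_{t+1}$ and the independence of $Z_{t+1}$. The only difference is presentational: you isolate the ``substitution rule'' $\var{t}\big(\psi(Z_{t+1}^2)\big)=\psi\big(F_{Z^2}^{-1}(\al)\big)$ once at the outset and apply it, whereas the paper carries out the equivalent manipulation of the probability $\IP_{t-1}\big(F_Z^{-1}(\al)\sqrt{\cP_t^T(\cdot)}\le m\big)$ explicitly inside the inductive step---the same computation, just unpacked.
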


\subsection{Time-consistent VaR for aggregated losses}\label{sec:var_multi}

We now come to the computation of the $m$-day-ahead $\var{}$.  So far, we have considered risk positions at a fixed day $T$ that is ahead of time $t<T$ up to which information in the form of the filtration $\cF_t$ is available. The $m$-day-ahead $\var{}$ is a risk assessment of aggregated losses $L_{t+k}$ from the time period $[t+1, t+m]$ for $t+m\le T$. 
Next we show that $\tildevar{}$ linearizes across the aggregation of GARCH$(1,1)$ losses. 

\begin{proposition}\label{prop:var_linear}
Let $(L_t)_{t=0}^T$ be the loss process given by the \GARCH(1,1) model \eqref{eq:garch_loss}, then we have for fixed $t\in\{1,\ldots,T-1\}$ and $m\in\IN$ such that $ t+m\le T$,
\balign\label{varsum}
\tildevar{t}\Big( \sum_{k=1}^{m} L_{t+k} \Big) = F_Z^{-1}(\al) \sum_{k=1}^m 
\sqrt{ \cP_{t}^{t+k}\big(a_1 F_{Z^2}^{-1}(\alpha) + b \big) }=\sum_{k=1}^m \tildevar{t}\big(L_{t+k} \big).
\ealign
\end{proposition}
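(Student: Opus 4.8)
The plan is to treat the two equalities in \eqref{varsum} separately, the second being essentially free and the first carrying the content. For the second equality, for each $k\in\{1,\ldots,m\}$ the truncated loss process $(L_s)_{s=0}^{t+k}$ is again a \GARCH(1,1) process with the same parameters and $t<t+k\le T$, so Theorem~\ref{prop:garch_var_terminal} applied with $t+k$ in place of the terminal time $T$ gives $\tildevar{t}(L_{t+k})=F_Z^{-1}(\al)\sqrt{\cP_t^{t+k}(a_1F_{Z^2}^{-1}(\al)+b)}$; summing over $k$ yields the middle expression. So the real task is the first equality.

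For the first equality I would unwind the defining recursion. Since $\sum_{k=1}^m L_{t+k}\in L^0(\cF_{t+m})$, formula \eqref{cor:var_iterate} reduces it to $(\var{t}\circ\cdots\circ\var{t+m-1})(\sum_{k=1}^m L_{t+k})$, which I would evaluate by downward induction on $j\in\{m,\ldots,0\}$, proving
\balign\label{planstep}
(\var{t+j}\circ\cdots\circ\var{t+m-1})\Big(\sum_{k=1}^m L_{t+k}\Big)=\sum_{k=1}^{j}L_{t+k}+F_Z^{-1}(\al)\sum_{k=j+1}^m\sqrt{\cP_{t+j}^{t+k}\big(a_1F_{Z^2}^{-1}(\al)+b\big)}.
\ealign
The case $j=m$ is trivial (empty composition, empty second sum) and the case $j=0$ is the assertion. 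For the inductive step I would apply $\var{t+j}$ to the right-hand side of \eqref{planstep} written with $j+1$ in place of $j$, pull out the $\cF_{t+j}$-measurable summand $\sum_{k=1}^{j}L_{t+k}$ using translation invariance (Definition~\ref{def:cond_rm}(iii)), and substitute $\sigma_{t+j+2}^2=a_0+\sigma_{t+j+1}^2(a_1Z_{t+j+1}^2+b)$ into each $\cP_{t+j+1}^{t+k}$. Using $L_{t+j+1}=\sigma_{t+j+1}Z_{t+j+1}$ with $\sigma_{t+j+1}\in L^0(\cF_{t+j})$, this leaves one with computing
\balign\label{planone}
\var{t+j}\Big(\sigma_{t+j+1}Z_{t+j+1}+F_Z^{-1}(\al)\sum_{k=j+2}^m\sqrt{Q_k\big(Z_{t+j+1}^2\big)}\Big),
\ealign
where each $Q_k$ is a polynomial with nonnegative $\cF_{t+j}$-measurable coefficients and strictly positive leading coefficient. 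A short polynomial computation identifies $F_Z^{-1}(\al)\sum_{k=j+1}^m\sqrt{\cP_{t+j}^{t+k}(a_1F_{Z^2}^{-1}(\al)+b)}$ with $\sigma_{t+j+1}F_Z^{-1}(\al)+F_Z^{-1}(\al)\sum_{k=j+2}^m\sqrt{Q_k(F_{Z^2}^{-1}(\al))}$, so the induction closes once \eqref{planone} is shown to equal the latter.

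The hard part is precisely that single-step evaluation of \eqref{planone}. Because its argument is a deterministic function $g$ of $Z_{t+j+1}$ with $\cF_{t+j}$-measurable coefficients and $Z_{t+j+1}$ is independent of $\cF_{t+j}$, the quantity \eqref{planone} is the conditional $\al$-quantile of $g(Z_{t+j+1})$, exactly the type of computation carried out for $m^\ast$ in Example~\ref{ex:counter_ex}. The subtlety is that $g$ is the sum of the \emph{linear} term $\sigma_{t+j+1}z$ and an \emph{even}, $|z|$-increasing term, so $g$ need not be monotone on all of $\IR$; I would invoke Assumptions~A — symmetry of $Z$ and $F_Z^{-1}(\al)>0$ — to argue that the $\al$-quantile of $g(Z_{t+j+1})$ is governed by the region $\{z>0\}$, on which every summand is strictly increasing, so that the linear piece contributes $\sigma_{t+j+1}F_Z^{-1}(\al)$ and each $\sqrt{Q_k(Z_{t+j+1}^2)}$ contributes $\sqrt{Q_k(F_{Z^2}^{-1}(\al))}$ (the latter exactly as in Example~\ref{ex:counter_ex}), and that these two transfers — to $F_Z^{-1}(\al)$ for the linear term and to $F_{Z^2}^{-1}(\al)$ for the even terms — may be carried out simultaneously. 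Making this simultaneous transfer rigorous, i.e.\ controlling the contribution of $\{z\le 0\}$ to the conditional distribution function of $g(Z_{t+j+1})$ near level $\al$, is where the work lies; with it in hand, \eqref{planstep} at $j=0$ is exactly \eqref{varsum}.
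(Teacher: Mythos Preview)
Your approach coincides with the paper's: both arguments unwind the composition \eqref{cor:var_iterate} inductively (the paper inducts on $m$, you on the composition depth $j$; these amount to the same peeling), strip off $\cF$-measurable summands via translation invariance, insert $\sigma_{t+2}^2=a_0+\sigma_{t+1}^2(a_1Z_{t+1}^2+b)$, and reduce everything to the single-step evaluation of $\var{t}$ applied to an expression of the form $\sigma_{t+1}Z_{t+1}+F_Z^{-1}(\alpha)\sum_k\sqrt{Q_k(Z_{t+1}^2)}$.

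On that single step you are actually more scrupulous than the paper. The paper dispatches it by asserting that the expression ``is a strictly increasing $\cF_t$-measurable function of $Z_{t+1}^2$'' and then simultaneously replaces $Z_{t+1}$ by $F_Z^{-1}(\alpha)$ in the linear summand and $Z_{t+1}^2$ by $F_{Z^2}^{-1}(\alpha)$ in the even summands (see the passage leading to \eqref{hack001} and its reuse in the induction step). As you note, the linear term $\sigma_{t+1}Z_{t+1}$ is not a function of $Z_{t+1}^2$, so that assertion is not literally correct, and the simultaneous substitution of two \emph{different} quantiles for the same random variable is exactly the point needing justification. You flag this as ``where the work lies'' without completing it; the paper simply passes over it. In summary, your proposal and the paper's proof are the same modulo the bookkeeping of the induction, and you have put your finger on the soft spot that the paper's own argument also contains.
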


\begin{proof}
First note that for $m=2$ we know from \eqref{2sum} that
\baligns
\tildevar{t}\big( L_{t+1} + L_{t+2} \big) &= \var{t} \big( L_{t+1} + \var{t+1}(L_{t+2}) \big)\\
&= \var{t} \big( \sigma_{t+1} Z_{t+1} + \sigma_{t+2} F_Z^{-1}(\alpha) \big),
\ealigns
where the second line follows from \eqref{varL}. 
By \eqref{eq:garch_loss} we have $\sigma_{t+2}^2 = a_0 + \sigma_{t+1}^2 (a_1  Z_{t+1}^2 + b)$ which transforms the last equation into
\baligns
\tildevar{t}\big( L_{t+1} + L_{t+2} \big) &=  \var{t} \big( \sigma_{t+1} Z_{t+1} + F_Z^{-1}(\alpha) \sqrt{a_0 +\sigma_{t+1}^2( a_1  Z_{t+1}^2 + b)}\big).
\ealigns
By the definition of $\var{t}$ and the fact that $\sigma_{t+1} Z_{t+1} + F_Z^{-1}(\alpha) \sqrt{a_0 +  \sigma_{t+1}^2 ( a_1 Z_{t+1}^2 + b)}$ is a strictly increasing $\cF_t$-measurable function of $Z^2_{t+1}$, we find that 
\balign\label{hack001}
\tildevar{t}\big( L_{t+1} + L_{t+2} \big) &=  \sigma_{t+1} F_Z^{-1}(\alpha) + F_Z^{-1}(\alpha) \sqrt{a_0 + \sigma_{t+1}^2( a_1 F_{Z^2}^{-1}(\alpha)^2 + b)},
\ealign
which is equal to the sum $\tildevar{t}\big(L_{t+1} \big) + \tildevar{t}\big(L_{t+2}\big)$ and also equal to the corresponding sum in the center.
We proceed by induction and assume that \eqref{varsum} is true for $\sum_{k=1}^{m-1}  L_{t+k}$.
Since the sum is $\cF_{t+m}$-measurable, we obtain by \eqref{cor:var_iterate}
\baligns
\tildevar{t}\Big( \sum_{k=1}^{m} L_{t+k} \Big) 
&= \Big( \var{t} \circ \var{t+1} \circ \cdots \circ \var{t+m-1} \Big)\Big( \sum_{k=1}^{m-1} L_{t+k}  + L_{t+m} \Big)\\
& = \Big( \var{t} \circ \var{t+1} \circ \cdots \circ \var{t+m-2} \Big)\Big( \sum_{k=1}^{m-1} L_{t+k}  + \var{t+m-1} (L_{t+m} )\Big)\\
& =  \var{t}(L_{t+1} + \var{t+1} (L_{t+2}+ \cdots +  \var{t+m-2}(L_{t+m-1} )  + \var{t+m-1} (L_{t+m} )))\\
& = \var{t}\Big(L_{t+1} + \tildevar{t+1}\Big(\sum_{k=2}^{m}L_{t+k}  \Big)\Big)
= \var{t}\Big(L_{t+1} +\sum_{k=2}^{m}  \tildevar{t+1}\Big(L_{t+k}  \Big)\Big)
\ealigns
where the last identity follows by the induction hypothesis, which also implies
\baligns
\sum_{k=2}^{m}  \tildevar{t+1}\Big(L_{t+k}  \Big) &=  F_Z^{-1}(\alpha) \sum_{k=2}^{m} \sqrt{  \cP_{t+1}^{t+k} \big( a_1 F_{Z^2}^{-1}(\alpha) + b \big)  } \\
&= F_Z^{-1}(\alpha)\sum_{k=2}^{m}  \sqrt{ a_0 \sum_{j=0}^{k-3} \big( a_1 F_{Z^2}^{-1}(\alpha) + b \big)^j + \sigma_{t+2}^2 \big( a_1 F_{Z^2}^{-1}(\alpha) + b \big)^{k-2}  }.
\ealigns
We use $\sigma_{t+2}^2  = a_0 + \sigma_{t+1}^2(a_1 Z_{t+1}^2+b)$ from \eqref{eq:garch_loss} and observe that 
\baligns
&\sum_{k=2}^{m} \sqrt{ a_0 \sum_{j=0}^{k-3} \big( a_1 F_{Z^2}^{-1}(\alpha) + b \big)^j + \Big( a_0 + \sigma_{t+1}^2(a_1 Z_{t+1}^2+b) \Big) \big( a_1 F_{Z^2}^{-1}(\alpha) + b \big)^{k-2}  }\\
&\qquad = \sum_{k=2}^{m}  \sqrt{ a_0 \sum_{j=0}^{k-2} \big( a_1 F_{Z^2}^{-1}(\alpha) + b \big)^j + \Big(\sigma_{t+1}^2(a_1 Z_{t+1}^2+b) \Big) \big( a_1 F_{Z^2}^{-1}(\alpha) + b \big)^{k-2}  }
\ealigns
is a strictly increasing function in $Z^2_{t+1}$. Hence we can proceed by the same argument as in the pretext leading to \eqref{hack001} to achieve ultimately 
\baligns
\var{t}\Big(L_{t+1} &+\sum_{k=2}^{m}  \tildevar{t+1}\Big(L_{t+k}  \Big)\Big) \\
&= \sigma_{t+1} F_Z^{-1}(\alpha) 
 + F_Z^{-1}(\alpha) \sum_{k=2}^{m} \sqrt{ a_0 \sum_{j=0}^{k-2} \big( a_1 F_{Z^2}^{-1}(\alpha) + b \big)^j + \sigma_{t+1}^2\big( a_1 F_{Z^2}^{-1}(\alpha) + b \big)^{k-1}  }\\
& = 
\sigma_{t+1} F_Z^{-1}(\alpha)  +  F_Z^{-1}(\alpha)  \sum_{k=2}^{m} \sqrt{ \cP_{t}^{t+k} \big( a_1 F_{Z^2}^{-1}(\alpha) + b \big)} \\
&=  F_Z^{-1}(\alpha)  \sum_{k=1}^{m} \sqrt{ \cP_{t}^{t+k} \big( a_1 F_{Z^2}^{-1}(\alpha) + b \big)}\\
&=  \sum_{k=1}^{m} \tildevar{t}\big( L_{t+k}\big).
\ealigns
This finishes the proof.
\end{proof}

\section{Conditional time-consistent Average Value-at-Risk}\label{sec:avar}

This section is devoted to the study of time-consistent alternatives for the Average Value-at-Risk (AVaR). 
Due to coherence AVaR is commonly considered as a more reasonable rectification of VaR. 
For more details we refer to \citet[Chapter 4]{foellmerschied}. 
For $t\in\{0,\ldots,T\}$ we define $L^1(\cF_t)$ as the set of all $\IP_t$-integrable losses, where $\IP_t$ denotes the conditional probability with respect to $\cF_t$ and $\IE_t$ the corresponding conditional expectation.
The following definition relates AVaR to VaR.

\begin{definition}\label{def:avar}
Given a loss position $L \in L^1(\cF_T)$ the {\em Average Value-at-Risk at level} $\alpha \in (0,1)$ {\em at time} $t \in \{0,\ldots,T \}$ is  given by
\balign\label{eq:avar}
\avar{t}(L) = \frac{1}{1-\alpha}\int_\alpha^1 \varu{t}(L) \udu. 
\ealign
with $\var{t}(L)$ as in Definition~\ref{def:var}.
\end{definition}

Whereas VaR quantifies the risk associated to one particular level of risk, reflected in the choice of $\alpha$, AVaR as an integrated VaR takes into account VaR at the entire bandwidth of risk levels between $\alpha$ and $1$ and thus better reflects volume of extreme risks that VaR might neglect.

\smallskip

The following is the analog of a fact well-known for unconditional AVaR (e.g. Lemma~2.16 of \citet{mfe}).

\begin{remark}\label{rem:cts}\rm
If the loss position $L\in L^1(\cF_T)$ has a continuous distribution function, then
\balign\label{eq:avar_cts}
\avar{t}(L) = \IE_t \big[ L  ~|~ L > \var{t}(L)\big], \quad t=0,\ldots,T.
\ealign 
 Due to  \eqref{eq:avar_cts}, AVaR is often also referred to as \emph{conditional VaR} or \emph{Expected Shortfall}.
\halmos
\end{remark}

{\bf Assumption B: } \quad
Additionally to Assumptions A we require from now on also that $Z$ has a continuous distribution function.

\subsection{Time-consistent AVaR for single day losses}

We focus again on the GARCH$(1,1)$ model from \eqref{eq:garch_loss}.

\begin{example}\label{ex:1_day_avar}\rm
Assume the setting as in Example~\ref{ex:1_day}. 
For $t=0,\ldots,T-1$ let $L_{t+1}$ be given by \eqref{eq:garch_loss}.
Then  $\avar{t}(L_{t+1})$  is the \emph{$1$-day-ahead-AVaR}.
If the innovations $(Z_t)_{t\geq 0}$ have a continuous distribution function $F_Z$, then
by linearity of the conditional expectation and $\cF_t$-measurability of $\sigma_{t+1}$ we get for $t=0,\ldots,T-1$,
\baligns
\avar{t}(L_{t+1}) &= \sigma_{t+1} \, \IE_t \, \big[ Z_{t+1} ~|~ Z_{t+1} > F_Z^{-1}(\alpha) \big]\\
&= \sigma_{t+1} \, \frac{1}{1-\al} \int_{F_Z^{-1}(\alpha) }^\infty y \ud F_Z(y)\\
&= \sigma_{t+1} \, \avar{}(Z).
\ealigns
This calculation can also be found in \citet{mcneil_frey}.
\halmos
\end{example}

In analogy to Section~\ref{sec:var}, a time-consistent version of AVaR is constructed as follows.

\begin{definition}\label{def::avar_DPP}
Given a loss position $L \in L^1(\cF_T)$  and the $\avar{t}(L)$ as in Definition~\ref{def:avar}. 
Then the  {\em time-consistent  Average Value-at-Risk at level} $\alpha \in (0,1)$ for $L$
 is defined by
\balign\label{eq:avar_DPP}
\tildeavar{T}(L) := L, \quad \tildeavar{t}(L):=\avar{t}\big( \tildeavar{t+1}(L) \big), \quad t=0,\ldots,T-1.
\ealign 
\end{definition}

For the Average Value-at-Risk of the squared loss $L_T^2$ at time $T$ we can derive an explicit formula similar to \eqref{eq:garch_var}. Note that, though AVaR of $L_{T}^2$ allows for an interpretation as the conditonal volatility at time $T$, our purpose of investigation is to employ AVaR of $L_{T}^2$ to derive pragmatic bounds to AVaR itself, see Section~\ref{sec:as_bounds} below.

\medskip

We start with a result analog to Theorem~\ref{prop:garch_var_terminal}, and recall that $\sum_{k=0}^{-1} a_k= 0$.
The proof is given in Appendix~\ref{AppB}.

\begin{theorem}\label{prop:consistent_avar}
Let $(L_t)_{t=0}^T$ be given by the \GARCH(1,1) model \eqref{eq:garch_loss}. 
Then we have for the squared loss $L^2_T\in L^1(\cF_T)$ at terminal time $T$
\balign\label{eq:garch_avar}
\tildeavar{t}(L_T^2) = \frac1{1-\alpha}\int_\alpha^1 F_{Z^2}^{-1}(u) \udu ~ \cP^T_{t}\Big(a_1 \frac{ 1}{1-\alpha}  \int_\alpha^1 F_{Z^2}^{-1}(u) \udu + b \Big), \quad t=0,\ldots,T-1,
\ealign
where $\cP_t^T: \IR \to \IR$ is an $\cF_t$-measurable mapping given by
\baligns
\cP^T_t(x) = a_0 \sum_{k=0}^{T-t-2} x^k + \sigma_{t+1}^2 x^{T-t-1}, \quad t=0,\ldots,T-1.
\ealigns
\end{theorem}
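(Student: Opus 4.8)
The plan is to prove \eqref{eq:garch_avar} by backward induction on $t$, running from $t=T-1$ down to $t=0$, in complete parallel with the argument for Theorem~\ref{prop:garch_var_terminal}. Throughout I would write $c:=\frac1{1-\alpha}\int_\alpha^1 F^{-1}_{Z^2}(u)\,\ud u$, which is a finite constant since $\E[Z^2]=1<\infty$; the asserted identity then reads $\tildeavar{t}(L_T^2)=c\,\cP^T_t(a_1 c+b)$. For the base case $t=T-1$ one has $\tildeavar{T-1}(L_T^2)=\avar{T-1}(\sigma_T^2 Z_T^2)$, and since $\sigma_T^2$ is $\cF_{T-1}$-measurable while $Z_T$ is independent of $\cF_{T-1}$ with continuous distribution function (Assumption~B), the same scaling argument as in Example~\ref{ex:1_day} — applied levelwise to $\varu{T-1}$ and then integrated as in \eqref{eq:avar} — yields $\avar{T-1}(\sigma_T^2 Z_T^2)=\sigma_T^2\,c$, which matches $c\,\cP^T_{T-1}(a_1 c+b)=c\,\sigma_T^2$ because $\cP^T_{T-1}\equiv\sigma_T^2$ (empty-sum convention).

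For the inductive step I would assume $\tildeavar{t+1}(L_T^2)=c\,\cP^T_{t+1}(a_1 c+b)$ and substitute $\sigma_{t+2}^2=a_0+\sigma_{t+1}^2(a_1 Z_{t+1}^2+b)$ from \eqref{eq:garch_loss} into $\cP^T_{t+1}$, reorganising the resulting geometric sum to obtain
\[
c\,\cP^T_{t+1}(a_1 c+b)=c\,a_0\sum_{k=0}^{T-t-2}(a_1 c+b)^k+c\,\sigma_{t+1}^2(a_1 c+b)^{T-t-2}\,(a_1 Z_{t+1}^2+b).
\]
This is an affine, strictly increasing, $\cF_t$-measurable function of $Z^2_{t+1}$: the additive term is a deterministic constant and the coefficient $\eta:=c\,\sigma_{t+1}^2(a_1 c+b)^{T-t-2}$ is a nonnegative $\cF_t$-measurable factor. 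Applying $\tildeavar{t}(L_T^2)=\avar{t}\big(\tildeavar{t+1}(L_T^2)\big)$ from Definition~\ref{def::avar_DPP}, the constant drops out by translation invariance, and for the remaining term one computes, levelwise, $\varu{t}\big(\eta(a_1 Z_{t+1}^2+b)\big)=\eta\,(a_1 F^{-1}_{Z^2}(u)+b)$ using independence of $Z_{t+1}$ from $\cF_t$; integrating over $u\in(\alpha,1)$ gives $\avar{t}\big(\eta(a_1 Z_{t+1}^2+b)\big)=\eta(a_1 c+b)=c\,\sigma_{t+1}^2(a_1 c+b)^{T-t-1}$. Adding the two contributions and recognising
\[
c\,a_0\sum_{k=0}^{T-t-2}(a_1 c+b)^k+c\,\sigma_{t+1}^2(a_1 c+b)^{T-t-1}=c\,\cP^T_t(a_1 c+b)
\]
closes the induction.

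The only genuinely delicate point is the identity $\avar{t}\big(\eta(a_1 Z_{t+1}^2+b)\big)=\eta(a_1 c+b)$ for an $\cF_t$-measurable nonnegative $\eta$: this is not literally covered by the positive homogeneity and translation invariance axioms, which are phrased for deterministic scalars, but follows from the $\essinf$ definition of $\var{t}$ together with independence of the innovation and continuity of $F_Z$, exactly the mechanism used to derive \eqref{varL} in Example~\ref{ex:1_day}; one records it first levelwise for $\varu{t}$ and then transfers it to $\avar{t}$ via \eqref{eq:avar}. Once this ``pull out the $\cF_t$-measurable increasing transformation'' principle is in place, everything else is bookkeeping with the polynomials $\cP^T_t$, identical in form to the VaR case, and integrability at each stage is automatic since $c<\infty$ and $L_T^2\in L^1(\cF_T)$ by hypothesis.
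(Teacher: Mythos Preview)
Your proof is correct and follows the same overall architecture as the paper's: backward induction on $t$, with the GARCH recursion $\sigma_{t+2}^2=a_0+\sigma_{t+1}^2(a_1 Z_{t+1}^2+b)$ exposing the induction hypothesis as an $\cF_t$-measurable affine function of $Z_{t+1}^2$, whose AVaR is then computed explicitly.

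The one genuine difference lies in how that AVaR of the affine expression is evaluated. The paper works through the Expected-Shortfall representation \eqref{eq:avar_cts}: it sets $G_t:=\cP^T_t(a_1\bar\kappa_2+b)$, computes $\var{t-1}(G_t)$ explicitly, invokes an auxiliary lemma (Lemma~\ref{lemma:var_lemma}) to identify the conditioning event $\{G_t>\var{t-1}(G_t)\}$ with $\{Z_t^2>F^{-1}_{Z^2}(\alpha)\}$, and then integrates against $dF_{Z^2}$. You instead go straight through the defining integral \eqref{eq:avar}: you compute $\varu{t}\big(\eta(a_1 Z_{t+1}^2+b)\big)=\eta(a_1 F^{-1}_{Z^2}(u)+b)$ levelwise and average over $u\in(\alpha,1)$. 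Your route is slightly more economical---it bypasses the need for Lemma~\ref{lemma:var_lemma} and the conditional-expectation detour---at the cost of having to justify the ``pull out the $\cF_t$-measurable increasing transformation'' step for $\varu{t}$ at every level $u$, which you correctly flag as the only delicate point. The paper's route, on the other hand, isolates that manipulation once in Lemma~\ref{lemma:var_lemma} and then reuses it for the AVaR bounds in Propositions~\ref{prop:avar_jensen} and~\ref{prop:avar_lower}, where the conditional-expectation form is more convenient for applying Jensen's inequality.
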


\medskip

It is also possible to derive expressions for $m-$day ahead $\tildeavar{}$. 
As usual we define $\prod_{j=1}^{0} a_j=1$.

\begin{proposition}\label{prop:garch_avar}
Let $(L_t)_{t=0}^T$ be given by the \GARCH(1,1) model \eqref{eq:garch_loss}. 
For $t>0$ define  $\cQ_t^{t+1}=\sigma_{t+1}$ and for fixed $m\ge 2$ 
\balign\label{eq:Q}
\cQ_t^{t+m}(z_1,z_2,\ldots,z_{m-1}) :=   a_0 \sum_{k=0}^{m-2} \prod_{j=1}^{k}  (a_1 z_j+b) + \sigma_{t+1}^2 \prod_{j=1}^{m-1} (a_1z_k+b).
\ealign
Then $\tildeavar{t}(L_{t+1})=\avar{}(Z)\,\sigma_{t+1}$ and for fixed $m\ge 2$
\balign\label{eq:garch_avar2}
\tildeavar{t}(L_{t+m}) &= \frac{ \avar{}(Z)}{(1-\alpha)^{m-1} }  \underbrace{\int_{F_Z^{-1}(\alpha)}^\infty \cdots \int_{F_Z^{-1}(\alpha)}^\infty}_{(m-1)\text{-times}} \sqrt{   \cQ_t^{t+m}(z_1,\ldots,z_{m-1}) }  {\ud F_{Z^2}(z_1) \cdots \ud F_{Z^2}(z_{m-1}).}
\ealign
\end{proposition}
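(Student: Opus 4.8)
The plan is to iterate Definition~\ref{def::avar_DPP} so that $\tildeavar{t}(L_{t+m})=\big(\avar{t}\circ\avar{t+1}\circ\cdots\circ\avar{t+m-1}\big)(L_{t+m})$ (cf.~Corollary~\ref{cor:iterate} with $\rho_s=\avar{s}$, $\phi_s=\tildeavar{s}$), and to peel off the conditional AVaRs one at a time, from the innermost operator outward. The case $m=1$ is Example~\ref{ex:1_day_avar}, so fix $m\ge2$. The engine is the elementary transformation rule: if $W$ is independent of $\cF_s$ with continuous distribution function $F_W$ and infinite right endpoint, if $h(\eta,\cdot)$ is continuous and increasing for each realization of an $\cF_s$-measurable $\eta$, and $h(\eta,W)\in L^1(\cF_s)$, then
\balign
\varu{s}\big(h(\eta,W)\big)=h\big(\eta,F_W^{-1}(u)\big),\qquad\avar{s}\big(h(\eta,W)\big)=\frac{1}{1-\alpha}\int_\alpha^1 h\big(\eta,F_W^{-1}(u)\big)\,\ud u=\frac{1}{1-\alpha}\int_{F_W^{-1}(\alpha)}^\infty h(\eta,w)\,\ud F_W(w).
\ealign
The quantile identity follows by conditioning on $\cF_s$ (there $\eta$ is frozen and $W$ remains independent with law $F_W$), inverting $h(\eta,\cdot)$, and reading off the essential infimum; the last equality is the change of variables $w=F_W^{-1}(u)$, valid because $F_W$ is continuous. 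Here $W$ will always be one of the $Z_{t+j}^2$, whose distribution function $F_{Z^2}$ is continuous with infinite right endpoint by Assumptions~A and~B.

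Peeling the innermost operator gives, by Example~\ref{ex:1_day_avar} and $\cF_{t+m-1}$-measurability of $\sigma_{t+m}\ge0$, that $\tildeavar{t+m-1}(L_{t+m})=\avar{}(Z)\,\sigma_{t+m}=\avar{}(Z)\sqrt{\sigma_{t+m}^2}$. I would then prove by induction on $n\in\{0,\dots,m-1\}$ that
\balign\label{claimRn}
\big(\avar{t+m-1-n}\circ\cdots\circ\avar{t+m-1}\big)(L_{t+m})=\frac{\avar{}(Z)}{(1-\alpha)^{n}}\underbrace{\int_{F_{Z^2}^{-1}(\alpha)}^\infty\!\!\!\cdots\int_{F_{Z^2}^{-1}(\alpha)}^\infty}_{n\text{-times}}\sqrt{\mathcal{R}_n(z_1,\dots,z_n)}\,\ud F_{Z^2}(z_1)\cdots\ud F_{Z^2}(z_n),
\ealign
with $\mathcal{R}_n(z_1,\dots,z_n)=a_0\sum_{k=0}^{n-1}\prod_{j=1}^{k}(a_1z_j+b)+\sigma_{t+m-n}^2\prod_{j=1}^{n}(a_1z_j+b)$; the base case $n=0$ is the display just obtained (so $\mathcal{R}_0=\sigma_{t+m}^2$), and taking $n=m-1$ yields the assertion, since $\mathcal{R}_{m-1}=\cQ_t^{t+m}$ by \eqref{eq:Q}. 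For the inductive step, insert $\sigma_{t+m-n}^2=a_0+\sigma_{t+m-n-1}^2(a_1Z_{t+m-n-1}^2+b)$ from \eqref{eq:garch_loss} into the right-hand side of \eqref{claimRn}; the $n$-fold integral then becomes a continuous, increasing function of the single random variable $Z_{t+m-n-1}^2$ with the $\cF_{t+m-n-2}$-measurable $\sigma_{t+m-n-1}$ as parameter, since for fixed dummies the integrand $\sqrt{a_0\sum_{k}\prod_{j}(a_1z_j+b)+(a_0+\sigma_{t+m-n-1}^2(a_1w+b))\prod_{j}(a_1z_j+b)}$ is increasing in $w\ge0$ and a definite integral of an increasing family of functions is increasing. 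Applying the transformation rule with $W=Z_{t+m-n-1}^2$ now evaluates $\avar{t+m-n-2}$: it appends one integration variable $z_{n+1}$, multiplies the prefactor by $(1-\alpha)^{-1}$, and leaves $\sigma_{t+m-n}^2$ replaced by $a_0+\sigma_{t+m-n-1}^2(a_1z_{n+1}+b)$; absorbing the newly produced $a_0\prod_{j=1}^{n}(a_1z_j+b)$ into the sum identifies the polynomial under the root as $\mathcal{R}_{n+1}$, completing the induction.

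The step that requires real care, and the one I expect to be the crux, is this last application of $\avar{t+m-n-2}$ to an already-iterated integral: conditional quantiles, and hence AVaR, do not commute with integration, so one cannot simply push $\avar{t+m-n-2}$ under the integral sign. The nested formula propagates only because, once $\sigma_{t+m-n}^2$ has been re-expressed through $Z_{t+m-n-1}^2$, the whole $n$-fold integral is an honest increasing continuous function of that single random variable with $\cF_{t+m-n-2}$-measurable coefficients, so the transformation rule applies literally and adds exactly one more layer of integration. Verifying that monotonicity (using that definite integration over the dummy variables preserves it), tracking the nesting of the filtration at each stage, and checking that every conditional AVaR in sight is finite (which holds under the paper's standing integrability assumptions, since $Z$ has finite variance) are the only points beyond the purely algebraic recursion $\mathcal{R}_n\mapsto\mathcal{R}_{n+1}$. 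One should also note that the lower limits produced by the substitution $u=F_{Z^2}(z)$ are the $\alpha$-quantiles of $Z^2$.
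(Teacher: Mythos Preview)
Your proof is correct and follows essentially the same route as the paper: both argue by induction, peel off one conditional $\avar{}$ at a time, substitute the GARCH recursion $\sigma_{t+m-n}^2=a_0+\sigma_{t+m-n-1}^2(a_1Z_{t+m-n-1}^2+b)$, observe that the resulting expression is an $\cF$-measurable increasing function of a single $Z^2$, and then convert the conditional AVaR into an extra layer of integration against $\ud F_{Z^2}$. Your explicit ``transformation rule'' packages cleanly what the paper does via Lemma~\ref{lemma:var_lemma} together with the conditional expectation computation, and your closing remark on the lower limits is apt: the substitution $u=F_{Z^2}(z)$ naturally produces $F_{Z^2}^{-1}(\alpha)$ rather than the $F_Z^{-1}(\alpha)$ printed in the statement.
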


\begin{proof}
For $m=1$ note that by \eqref{eq:avar} $\tildeavar{t}(L_{t+1})=\avar{t}(L_{t+1})= \sigma_{t+1}\frac1{1-\alpha} \int_{F_Z^{-1}(\alpha)}^\infty y \ud (y)$ as in Example~\ref{ex:1_day_avar}. 
For simplicity we set $\ov \kappa = \frac1{1-\alpha} \int_{F_Z^{-1}(\alpha)}^\infty y \ud (y)$.
 For $\tildeavar{t}(L_{t+2})$ we use this and then Lemma~\ref{lemma:var_lemma} and compute for $m=2$
\balign
\tildeavar{t}(L_{t+2}) &= \avar{t} \big(\tildeavar{t+1} (L_{t+2})  \big) = \avar{t} \big(\avar{t+1} (L_{t+2})  \big) \\
&= \avar{t} \big(\ov\kappa \, \sigma_{t+2} \big) \label{1}\\
&=  \ov\kappa  \avar{t} \big( \sigma_{t+2} \big)\label{2}\\
&=  \ov\kappa \, \IE_{t} \big[  \sigma_{t+2}  \mid \sigma_{t+2} > \var{t+1}(\sigma_{t+2}) \big]\nonumber \\
&= \ov\kappa \, \IE_{t}  \big[  \sigma_{t+2}  \mid Z^2_{t+1} > F_{Z^2}^{-1}(\alpha) \big]\nonumber\\
&= \frac{ \ov\kappa }{1-\alpha }  \, \IE_{t}  \big[ \sqrt{  a_0 + \sigma_{t+1}^2 \big(  a_1 Z_{t+1}^2+b \big)  } \1_{\{ Z^2_{t+1} > F_{Z^2}^{-1}(\alpha) \}} \big]\nonumber\\
&= \frac{ \ov\kappa }{1-\alpha }  \, \int_{F_Z^{-1}(\alpha)}^\infty \sqrt{  a_0 + \sigma_{t+1}^2 \big(  a_1 z+b \big)  }{ \ud F_{Z^2}(z),}\nonumber
\ealign
since $\sigma_{t+1}$ is $\cF_t$-measurable.
Assume that  \eqref{eq:garch_avar2} holds for $L_{t+2},\ldots,L_{t+m-1}$.
Then 
\baligns
&\tildeavar{t}(L_{t+m}) = \avar{t} \big( \tildeavar{t+1} \big( (L_{t+m}) \big) \big)\\
&=  \avar{t}  \Big(\frac{ \ov\kappa}{(1-\al)^{m-2}} 
\underbrace{\int_{F_Z^{-1}(\alpha)}^\infty \cdots \int_{F_Z^{-1}(\alpha)}^\infty}_{(m-2)\text{-times}} \sqrt{   \cQ_{t+1}^{(t+1)+(m-1)}(z_1,\ldots,z_{m-2}) }  \ud F_{Z^2}(z_1) \cdots \ud F_{Z^2}(z_{m-2})
\Big),
\ealigns
where
\[
\cQ_{t+1}^{(t+1)+(m-1)}(z_1,z_2,\ldots,z_{m-2}) :=   a_0 \sum_{k=0}^{m-3} \prod_{j=1}^{k}  (a_1 z_j+b) + \sigma_{t+2}^2 \prod_{j=1}^{m-2} (a_1z_k+b).
\]
Setting $\sigma_{t+2}^2 = a_0+\sigma^2_{t+1}(a_1 Z^2_{t+1} + b)$, then since $\sigma^2_{t+1}$ is $\cF_t$-measurable, factorization of $Z^2_{t+1}$ gives another integral and   another factor $1-\al$ in the denominator. 
\end{proof}

\subsection{Almost sure bounds for AVaR}\label{sec:as_bounds}

Finding an analytical expression for $\tildeavar{t}(L_T)$ for the (unsquared) \GARCH(1,1) loss is not straightforward. It is however possible to derive closed form bounds for $\tildeavar{t}(L_T)$.

\subsubsection{AVaR-bounds for single day losses}

We now derive a closed form upper bound to $\tildeavar{}$ which arises from an application of Jensen's inequality. For the proof of the following Proposition we refer to Appendix~\ref{AppB}.

\begin{proposition}\label{prop:avar_jensen}
Let $L_T$ be the loss position at time $T>0$ given by the \GARCH$(1,1)$ model \eqref{eq:garch_loss}. Then
\balign\label{eq:avar_jensen}
\avarup{t}(L_T) := \frac{1}{1-\alpha} \int_\alpha^1 F_Z^{-1}(y) \ud y  \, \sqrt{ \cP^T_{t}\Big( a_1 \frac{1}{1-\alpha}
 \int_\alpha^1 F_{Z^2}^{-1}(u) \udu + b \Big) }, \quad t=0,\ldots, T-1,
\ealign
where $\cP^T_t(\cdot)$ is given by \eqref{eq:poly}, satisfies 
\[
\tildeavar{t} \leq \avarup{t} \quad t=0,\ldots, T-1.
\]
\end{proposition}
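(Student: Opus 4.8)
The plan is to compare the backward iteration defining $\tildeavar{}$ with the analogous iteration for $\tildeavar{}(L_T^2)$, and to exploit the fact that $\tildeavar{t}(L_T^2)$ has already been computed explicitly in Theorem~\ref{prop:consistent_avar}. The key observation is that since $L_T = \sigma_T Z_T$ with $\sigma_T$ being $\cF_{T-1}$-measurable, and since $\avar{}$ is positively homogeneous on nonnegative $\cF_t$-measurable multiples, one expects the iteration for $\tildeavar{}(L_T)$ to produce at each step a $\sqrt{\cdot}$ of a quantity whose iteration closely mirrors the (linear) iteration producing $\tildeavar{t}(L_T^2)$. Concretely, I would first unwind one step: $\tildeavar{T-1}(L_T) = \avar{T-1}(L_T) = \sigma_T \avar{}(Z) = \sqrt{a_0 + \sigma_{T-1}^2(a_1 Z_{T-1}^2+b)}\,\avar{}(Z)$, and compare this with $\tildeavar{T-1}(L_T^2) = \big(a_0+\sigma_{T-1}^2(a_1 F_{Z^2}^{-1}\text{-avg}+b)\big)\,(F_{Z^2}^{-1}\text{-avg})$ from Theorem~\ref{prop:consistent_avar}, where I abbreviate $\frac{1}{1-\alpha}\int_\alpha^1 F_{Z^2}^{-1}(u)\,\ud u$ by its "avg" form.

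The second step is the heart of the argument: an inductive comparison. I would show by backward induction on $t$ that $\tildeavar{t}(L_T) \le \sqrt{\tildeavar{t}(L_T^2)}$. At the inductive step one has $\tildeavar{t}(L_T) = \avar{t}\big(\tildeavar{t+1}(L_T)\big) \le \avar{t}\big(\sqrt{\tildeavar{t+1}(L_T^2)}\big)$ by monotonicity of $\avar{t}$ and the induction hypothesis; then, since $\sqrt{\cdot}$ is concave, Jensen's inequality for the conditional expectation representation \eqref{eq:avar_cts} of $\avar{t}$ gives $\avar{t}\big(\sqrt{\tildeavar{t+1}(L_T^2)}\big) = \IE_t\big[\sqrt{\tildeavar{t+1}(L_T^2)} \mid \tildeavar{t+1}(L_T^2) > \var{t}(\tildeavar{t+1}(L_T^2))\big] \le \sqrt{\IE_t\big[\tildeavar{t+1}(L_T^2)\mid \cdots\big]}$, and one must check that the conditioning event is the same (it is, since $x\mapsto\sqrt{x}$ is strictly increasing, so $\var{t}$ of the square root corresponds to the square root of $\var{t}$ and the event $\{\sqrt{Y}>\var{t}(\sqrt{Y})\}$ equals $\{Y>\var{t}(Y)\}$). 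The right-hand side is then exactly $\sqrt{\avar{t}(\tildeavar{t+1}(L_T^2))} = \sqrt{\tildeavar{t}(L_T^2)}$, closing the induction. Finally, plugging in the explicit formula \eqref{eq:garch_avar} for $\tildeavar{t}(L_T^2)$ and comparing with the definition \eqref{eq:avar_jensen} of $\avarup{t}(L_T)$, one sees they match provided $\frac{1}{1-\alpha}\int_\alpha^1 F_Z^{-1}(y)\,\ud y$ in \eqref{eq:avar_jensen} dominates $\sqrt{\frac{1}{1-\alpha}\int_\alpha^1 F_{Z^2}^{-1}(u)\,\ud u}$; but in fact a cleaner route is to note that $\avarup{t}(L_T) = \sqrt{\tildeavar{t}(L_T^2)}$ exactly when one writes $\avar{}(Z) = \frac{1}{1-\alpha}\int_\alpha^1 F_Z^{-1}(y)\,\ud y$ as the leading factor and observes both expressions share the same polynomial $\cP^T_t$ evaluated at the same argument. [Here I would double-check the precise leading constant: \eqref{eq:garch_avar} has leading factor equal to the argument-average itself, while \eqref{eq:avar_jensen} has $\avar{}(Z)$; reconciling these two is where care is needed, and it suggests the intended bound is really $\tildeavar{t}(L_T) \le \avar{}(Z)\sqrt{\cP^T_t(a_1(F_{Z^2}^{-1}\text{-avg})+b)}$, which is \eqref{eq:avar_jensen} since $\avar{}(Z)=\frac{1}{1-\alpha}\int_\alpha^1 F_Z^{-1}$.]

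The main obstacle I anticipate is bookkeeping rather than conceptual: making the Jensen step rigorous requires confirming that at each stage the random variable $\tildeavar{t+1}(L_T^2)$ (resp. its square root) is a strictly increasing $\cF_t$-measurable function of $Z_{t+1}^2$, so that the conditional-expectation-given-exceedance representation of $\avar{t}$ applies with a clean, matching conditioning event, exactly as exploited in Example~\ref{ex:1_day_avar} and in the proof of Proposition~\ref{prop:garch_avar}. One also needs the integrability $\tildeavar{t+1}(L_T)\in L^1(\cF_T)$ at each step, which follows from the same monotone-function structure and finiteness of $\avar{}(Z)$ under Assumption B. Once these structural facts are in place, the inequality is a one-line application of concavity, and the identification of the right-hand side with $\avarup{t}(L_T)$ is just substitution of \eqref{eq:poly}.
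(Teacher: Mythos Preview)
Your inductive comparison $\tildeavar{t}(L_T)\le\sqrt{\tildeavar{t}(L_T^2)}$ is valid (both the base case and the Jensen step go through exactly as you describe), but it does \emph{not} yield the proposition as stated. Writing $\bar\kappa=\avar{}(Z)$ and $\bar\kappa_2=\avar{}(Z^2)$, Theorem~\ref{prop:consistent_avar} gives $\sqrt{\tildeavar{t}(L_T^2)}=\sqrt{\bar\kappa_2}\,\sqrt{\cP^T_t(a_1\bar\kappa_2+b)}$, whereas the target is $\avarup{t}(L_T)=\bar\kappa\,\sqrt{\cP^T_t(a_1\bar\kappa_2+b)}$. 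To pass from your bound to the proposition you would need $\sqrt{\bar\kappa_2}\le\bar\kappa$, but in fact the inequality goes the other way: by Jensen and the symmetry of $Z$,
\[
\bar\kappa^2=\big(\IE[Z\mid Z>F_Z^{-1}(\alpha)]\big)^2\le \IE[Z^2\mid Z>F_Z^{-1}(\alpha)]\le \IE\big[Z^2\mid Z>F_Z^{-1}(\tfrac{1+\alpha}{2})\big]=\bar\kappa_2,
\]
so $\bar\kappa\le\sqrt{\bar\kappa_2}$ and your bound is \emph{weaker} than $\avarup{t}$. You flagged this leading-constant issue in brackets but resolved it in the wrong direction.

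The paper avoids this loss by inducting directly on the claim $\tildeavar{t}(L_T)\le\avarup{t}(L_T)$ rather than comparing with the squared-loss iteration. In the inductive step one uses monotonicity to get $\tildeavar{t}(L_T)\le\avar{t}(\avarup{t+1}(L_T))=\bar\kappa\,\avar{t}\big(\sqrt{\cP^T_{t+1}(a_1\bar\kappa_2+b)}\big)$, pulling the deterministic constant $\bar\kappa$ out \emph{before} applying Jensen. Then Jensen is applied only to $\sqrt{\cP^T_{t+1}}$ (a strictly increasing $\cF_t$-measurable function of $Z_{t+1}^2$, so Lemma~\ref{lemma:var_lemma} handles the conditioning event), and a direct computation shows $\IE_t\big[\cP^T_{t+1}(a_1\bar\kappa_2+b)\mid Z_{t+1}^2>F_{Z^2}^{-1}(\alpha)\big]=\cP^T_t(a_1\bar\kappa_2+b)$, closing the induction with the correct leading factor $\bar\kappa$.
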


An easy alteration of the proof of the previous result yields a closed form lower bound $\avarlow{}$ for $\tildeavar{}$.

\begin{proposition}\label{prop:avar_lower}
Let $L_T$ be the loss position at time $T>0$ given by the \GARCH$(1,1)$ model \eqref{eq:garch_loss}.  Then
\balign\label{eq:avar_lower}
\avarlow{t}(L_T) := \frac{1}{1-\alpha} \int_\alpha^1 F_Z^{-1}(u) \udu \, \left( \frac{1}{1-\alpha}  \int_{\alpha}^1 \sqrt{a_1 F^{-1}_{Z^2}(y) + b} \, \ud y   \right)^{T-t-1} \sigma_{t+1}, \quad t=0,\ldots T-1,
\ealign
satisfies 
\[
\tildeavar{t}(L_T) \geq \avarlow{t}(L_T) \quad t=0,\ldots, T-1.
\]
\end{proposition}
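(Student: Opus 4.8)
The plan is to mimic the proof of Proposition~\ref{prop:avar_jensen}, but this time estimating from below rather than from above. Starting from the iterated representation $\tildeavar{t}(L_T) = (\avar{t} \circ \avar{t+1} \circ \cdots \circ \avar{T-1})(L_T)$, which follows from Definition~\ref{def::avar_DPP} exactly as \eqref{cor:var_iterate} follows from Definition~\ref{def:consistent_var}, I would work backwards from time $T$. The innermost step is $\avar{T-1}(L_T) = \avar{T-1}(\sigma_T Z_T) = \sigma_T\,\avar{}(Z)$ by Example~\ref{ex:1_day_avar}, where $\avar{}(Z) = \frac{1}{1-\alpha}\int_\alpha^1 F_Z^{-1}(u)\,\ud u$. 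Applying $\avar{T-2}$ to $\sigma_T\,\avar{}(Z)$, I substitute $\sigma_T^2 = a_0 + \sigma_{T-1}^2(a_1 Z_{T-1}^2 + b)$ and use that $\sqrt{a_0 + \sigma_{T-1}^2(a_1 z + b)}$ is increasing and $\cF_{T-2}$-measurable in $z = Z_{T-1}^2$, so that the conditioning event $\{\sigma_T > \var{T-2}(\sigma_T)\}$ becomes $\{Z_{T-1}^2 > F_{Z^2}^{-1}(\alpha)\}$; compare \eqref{1}--\eqref{2} and the subsequent display in the proof of Proposition~\ref{prop:garch_avar}.

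The key estimate, which replaces the Jensen upper bound, is the elementary inequality
\balign\label{eq:sqrt_lower}
\sqrt{a_0 + \sigma^2 (a_1 z + b)} \;\ge\; \sigma\,\sqrt{a_1 z + b}, \qquad a_0,\sigma^2,z \ge 0,
\ealign
which peels off the constant term $a_0$ and makes the square root multiplicatively separable in $\sigma$ and in $z$. Applying this inside the conditional expectation gives
\baligns
\avar{T-2}(\sigma_T\,\avar{}(Z))
&= \frac{\avar{}(Z)}{1-\alpha}\,\IE_{T-2}\!\Big[\sqrt{a_0 + \sigma_{T-1}^2(a_1 Z_{T-1}^2 + b)}\,\1_{\{Z_{T-1}^2 > F_{Z^2}^{-1}(\alpha)\}}\Big]\\
&\ge \frac{\avar{}(Z)}{1-\alpha}\,\sigma_{T-1}\,\int_{F_{Z^2}^{-1}(\alpha)}^\infty \sqrt{a_1 z + b}\,\ud F_{Z^2}(z)
= \avar{}(Z)\,\sigma_{T-1}\cdot\frac{1}{1-\alpha}\int_\alpha^1 \sqrt{a_1 F_{Z^2}^{-1}(y) + b}\,\ud y,
\ealigns
where the last step changes variables $y = F_{Z^2}(z)$ in the tail integral. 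This produces exactly one factor of $\frac{1}{1-\alpha}\int_\alpha^1\sqrt{a_1 F_{Z^2}^{-1}(y)+b}\,\ud y$ together with the surviving $\sigma_{T-1}$, matching the claimed form \eqref{eq:avar_lower} for $t = T-2$.

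I would then iterate: each further application of $\avar{t}$ (for $t$ decreasing from $T-3$ down to the target) substitutes $\sigma_{j+1}^2 = a_0 + \sigma_j^2(a_1 Z_j^2 + b)$, applies \eqref{eq:sqrt_lower} to drop $a_0$, factors out the $\cF_{t}$-measurable $\sigma_t$, and via the same increasing-transformation argument produces one more factor $\frac{1}{1-\alpha}\int_\alpha^1\sqrt{a_1 F_{Z^2}^{-1}(y)+b}\,\ud y$; after $T-t-1$ such steps one arrives at \eqref{eq:avar_lower}. The cleanest way to present this is a downward induction on $t$ with hypothesis: for every loss horizon, $\tildeavar{s}(L_T) \ge \avar{}(Z)\,\sigma_{s+1}\,\big(\frac{1}{1-\alpha}\int_\alpha^1\sqrt{a_1 F_{Z^2}^{-1}(y)+b}\,\ud y\big)^{T-s-1}$. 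The main obstacle, and the only point requiring care, is justifying that the lower bound \eqref{eq:sqrt_lower} is preserved under the subsequent applications of the monotone operator $\avar{t}$: this works because $\avar{t}$ is monotone (it inherits monotonicity from $\var{t}$ through Definition~\ref{def:avar}), so replacing the integrand by a smaller one at any stage only decreases the final value — but one must check that at each stage the quantity being bounded is still an increasing $\cF_t$-measurable function of the relevant $Z_j^2$, so that the conditioning event continues to reduce to $\{Z_j^2 > F_{Z^2}^{-1}(\alpha)\}$ and the recursion closes.
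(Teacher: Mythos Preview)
Your proposal is correct and follows essentially the same route as the paper's proof: the same base case, the same key inequality $\sqrt{a_0 + \sigma^2(a_1 z + b)} \ge \sigma\sqrt{a_1 z + b}$ to peel off $a_0$, the same reduction of the conditioning event to $\{Z_j^2 > F_{Z^2}^{-1}(\alpha)\}$ (which the paper packages as Lemma~\ref{lemma:var_lemma}), and the same downward induction using monotonicity of $\avar{t}$ to propagate the lower bound via $\tildeavar{t}(L_T) = \avar{t}(\tildeavar{t+1}(L_T)) \ge \avar{t}(\avarlow{t+1}(L_T))$. The point you flag as ``the only obstacle'' --- that the bounded quantity remains an increasing $\cF_t$-measurable function of $Z_{t+1}^2$ so the conditioning argument closes --- is exactly what Lemma~\ref{lemma:var_lemma} handles, and since $\avarlow{t+1}(L_T)$ is a positive constant times $\sigma_{t+2}$, this is immediate.
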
 

\subsubsection{AVaR-bounds for aggregated losses}

Unfortunately, for AVaR there exists no result corresponding to Proposition~\ref{prop:var_linear}, hence AVaR does not linearize across aggregation of GARCH losses. 
A key obstacle is that Lemma~\ref{lemma:var_lemma} does not apply. 
However, due to the subadditivity of AVaR and the property \eqref{eq:linearize}, we can derive an upper bound for the aggregation of GARCH losses.

\begin{proposition}\label{prop4.9}
Let $(L_t)_{t=0}^T$ be given by the \GARCH$(1,1)$ model \eqref{eq:garch_loss}.
Then,  for $t\in\{1,\ldots,T-1\}$ and $m\in\IN$ such that $t+m\le T$,
 the $m$-day-ahead $\tildeavar{t}$ of aggregated losses $\sum_{k=1}^m L_{t+k}$ is bounded by
\balign\label{eq:avar_bounds} 
\tildeavar{t} \big( \sum_{k=1}^m L_{t+m+1} \big) \leq \sum_{k=1}^m \avarup{t}(L_{t+m+1}). 
\ealign
\end{proposition}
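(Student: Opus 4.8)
The plan is to obtain \eqref{eq:avar_bounds} as an immediate consequence of two facts that are already at our disposal: coherence of the time-consistent $\tildeavar{}$, which through \eqref{eq:linearize} splits the risk of the aggregated loss into a sum of single-day risks, and the single-day bound of Proposition~\ref{prop:avar_jensen}, which is then applied term by term. (I read the right-hand side of the stated inequality as $\sum_{k=1}^m \avarup{t}(L_{t+k})$, in analogy with Proposition~\ref{prop:var_linear}; the index $t+m+1$ in the displayed statement should read $t+k$.)

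First I would verify that the dynamic risk measure $(\tildeavar{t})_{t=0}^T$ defined by \eqref{eq:avar_DPP} is coherent. Each $\avar{t}$ is coherent, i.e. monotone, translation invariant, positively homogeneous and subadditive, the conditional counterpart of the classical facts in \citet[Chapter~4]{foellmerschied}. These properties are inherited by the backward iteration: writing $\phi_t := \tildeavar{t}$ and $\rho_t := \avar{t}$, one shows by downward induction on $t$ that each $\phi_t$ is positively homogeneous and subadditive, using
\[
\phi_t(X+Y) = \rho_t\big(\phi_{t+1}(X+Y)\big) \le \rho_t\big(\phi_{t+1}(X)+\phi_{t+1}(Y)\big) \le \rho_t\big(\phi_{t+1}(X)\big)+\rho_t\big(\phi_{t+1}(Y)\big) = \phi_t(X)+\phi_t(Y),
\]
where the first inequality combines monotonicity of $\rho_t$ with the induction hypothesis at time $t+1$ and the second is subadditivity of $\rho_t$; positive homogeneity is analogous. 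Hence the linearization inequality \eqref{eq:linearize} is available for $\phi = \tildeavar{}$.

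Next I would apply \eqref{eq:linearize} with $X_{t+k} = L_{t+k}$, $k=1,\ldots,m$. By \eqref{eq:garch_loss}, $\sigma_{t+k}$ is $\cF_{t+k-1}$-measurable, so $L_{t+k}=\sigma_{t+k}Z_{t+k}\in L^0(\cF_{t+k})$; under Assumptions~A--B each $L_{t+k}$ and their sum lie in $L^1(\cF_t)$, so all the AVaR's below are defined, and $t+m\le T$ keeps everything within the horizon. This gives
\[
\tildeavar{t}\Big(\sum_{k=1}^m L_{t+k}\Big) \;\le\; \sum_{k=1}^m \tildeavar{t}(L_{t+k}).
\]
Finally, Proposition~\ref{prop:avar_jensen} applied with terminal time $t+k$ in place of $T$ (its statement and proof carry over verbatim; for $k=1$ one has $\cP_t^{t+1}(x)=\sigma_{t+1}^2$, whence $\avarup{t}(L_{t+1})=\avar{}(Z)\,\sigma_{t+1}=\tildeavar{t}(L_{t+1})$, consistently with Example~\ref{ex:1_day_avar}) yields $\tildeavar{t}(L_{t+k})\le \avarup{t}(L_{t+k})$ for every $k$. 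Summing over $k$ finishes the proof.

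I do not anticipate a genuine obstacle: once coherence of $\tildeavar{}$ is established, the argument is a two-line chain of inequalities. The only points requiring some care are the bookkeeping that coherence is preserved under the backward iteration \eqref{eq:avar_DPP} (so that \eqref{eq:linearize} may legitimately be invoked) and the measurability/integrability of the summands $L_{t+k}$. In contrast to Proposition~\ref{prop:var_linear}, one should not expect equality here, since the factorization device (Lemma~\ref{lemma:var_lemma}) that produced equality for VaR has no AVaR analogue — which is precisely the remark made just before the statement.
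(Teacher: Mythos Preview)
Your proposal is correct and follows essentially the same approach as the paper: apply the subadditivity inequality \eqref{eq:linearize} to split the aggregated loss and then use Proposition~\ref{prop:avar_jensen} term by term. Your write-up is in fact more careful than the paper's own proof, since you explicitly verify that coherence survives the backward iteration \eqref{eq:avar_DPP} (the paper simply invokes \eqref{eq:linearize}), and you also correctly flag the indexing typo $t+m+1$ versus $t+k$.
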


\begin{proof}
By the subadditivity property \eqref{eq:linearize} $\tildeavar{t} \big( \sum_{k=1}^m L_{t+m+1} \big)$ satisfies
\[
\tildeavar{t} \big( \sum_{k=1}^m L_{t+m+1} \big) \leq \sum_{k=1}^m \tildeavar{t} \big(  L_{t+m+1} \big).
\]
Now the assertion follows from an application of Proposition~\ref{prop:avar_jensen}.
\end{proof}

\begin{remark}\label{rem:avar_lower_approx}\rm
Due to the lack of linearization across aggregation of GARCH losses, the aggregation of the $m$-ahead AVaR bounds from Proposition~\ref{prop:avar_lower} do not produce a proper lower bound for $\tildeavar{t} \big( \sum_{k=1}^m L_{t+m+1} \big)$. 
Whereas in case of the single $m$-day-ahead $\avarlow{t}(L_T)$ is a true lower bound to $\tildeavar{t}(L_T)$,  their aggregation $\sum_{k=1}^m \avarlow{t}(L_{t+m+1})$ is rather a lower bound to the upper bound $\sum_{k=1}^m \avarup{t}(L_{t+m+1})$.
 It can happen that $\sum_{k=1}^m \avarlow{t}(L_{t+m+1})$  is either an upper bound or a lower bound for $\sum_{k=1}^m \tildeavar{t}(L_{t+m+1})$. 
 Nevertheless, we will use $\sum_{k=1}^m \avarlow{t}(L_{t+m+1})$ as a weak lower bound in our numerical experiments to get an orientation about how much the upper bound $\sum_{k=1}^m \avarup{t}(L_{t+m+1})$ is tailing off.
\end{remark}

\section{Extreme value theory based quantile estimation}\label{sec:evt}

\subsection{Generalized Pareto Distribution}\label{sec:GPD}

\begin{figure}[ht!]
	\centering
  \includegraphics[width=16.5cm,height=9cm]{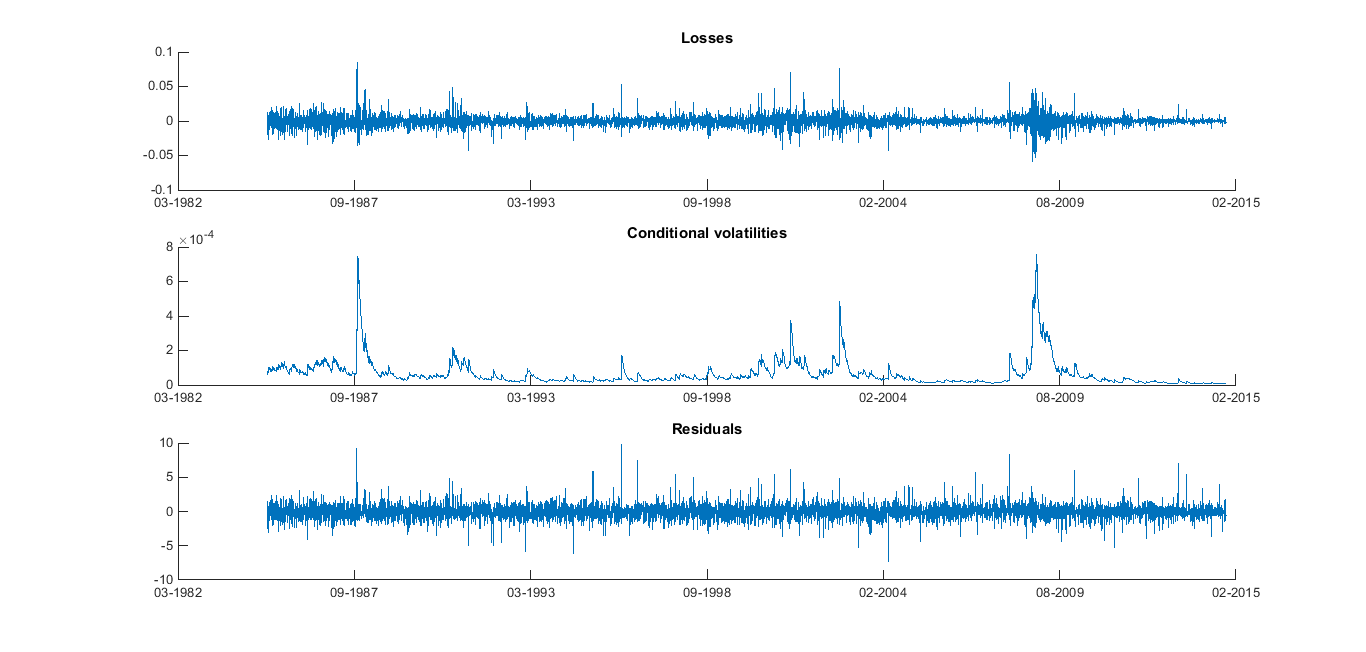}
	\caption{Motorola stock price analysis: loss data (top), conditional variances after fitting a \GARCH$(1,1)$ model (middle), and residuals of loss data (bottom).}
	\label{fig:motorola_returns_vola_resids}
\end{figure}    

Up to now we have not fixed the noise distribution, only assumed certain properties like infinite right endpoint or continuity of the distribution function.
Throughout we worked with $\alpha$ close to 1 corresponding to the noise distribution function to be close to 1.
Thus it is sufficient to specify the distribution function above some high threshold $u$. 
This is a  typical assumption in extreme value theory, and we will apply the Peaks-over-Threshold method (as in \cite{mcneil_frey}).
We first explain the setting in general.

\medskip
 
The \emph{Generalized Pareto distribution (GPD)} is given by
\balign\label{eq:gpd}
G_{\xi,\beta}(x) = \begin{cases} 1- \big( 1+ \dfrac{\xi}{\beta} x \big)^{-1/\xi}, &\xi \neq 0,\\  
1 - \exp\big( - \dfrac{x}{\beta} \big),	& \xi=0,
\end{cases}
\ealign
where $\beta>0$ and $\xi \in \IR$. If $\xi>0$ \eqref{eq:gpd} is defined for $x\geq 0$ and if $\xi<0$ \eqref{eq:gpd} is defined on $x \in [0, - \beta/\xi]$, see e.g. Section~3.4 in \citet{ekm}. Assume that we fix some high threshold $u>0$. Given a random variable $X$ with distribution function $F$ and right endpoint $x_F$, its associated \emph{excess distribution function} is defined as
\balign\label{eq:ex_dist}
F_u(y) = \IP\Big( X-u \leq y~|~X>u\Big) = \frac{F(y+u)-F(u)}{1-F(u)}, \quad 0 \leq y < x_F -u.
\ealign
The strength of the GPD is compressed in a result by \citet{pick} and \citet{BdH} which classifies the GPD as the limit distribution of a large class of excess distributions. More precisely, under mild conditions there exists a measurable non-negative parameter $\beta=\beta(u)$ such that 
\[
\lim_{u\to x_F} \sup_{0 \leq x \leq x_F - u} |F_u(x) - G_{\xi,\beta(u)}| = 0
\]
holds, see Theorem 3.4.13 in \citet{ekm} for a rigorous statement of this result. 
The density of \eqref{eq:gpd} is given by
\balign\label{eq:gpd_density}
g_{\xi,\beta}(x) = \begin{cases} \dfrac1\beta \big( 1+ \dfrac{\xi}{\beta} x \big)^{-1/\xi-1}, &\xi \neq 0,\\  
\dfrac1\beta\exp\big( -\dfrac{ x}{\beta} \big),	& \xi=0.
\end{cases}
\ealign
Under the assumption that $Z$ has the distribution function $F_Z$, which for some high enough threshold $u>0$ satisfies $F_u(x) = G_{\xi,\beta}(x)$ for $0 \leq x \leq x_F - u$ and for some $\xi \in \IR$ and $\beta>0$,  we find for $\alpha \geq F(u)$ (for $\xi=0$ we interpret this quantile as the quantile of the corresponding exponential distribution)
\balign
F_Z^{-1}(\alpha) &= u + \frac\beta\xi\Big( \Big( \frac{1-\alpha}{1-F(u)} \Big)^{{-\xi}}  -1\Big),\label{eq:var_02}\\
\frac{1}{1-\alpha} \int_\alpha^1 F_Z^{-1}(y) \ud y &= \frac{F_Z^{-1}(\alpha)}{1-\xi} + \frac{\beta-\xi u}{1-\xi}.\label{eq:avar_02}
\ealign   
By \eqref{comp} we obtain
$$F_{Z^2}^{-1}(\alpha) = F_Z^{-1}(\frac12(\alpha+1))^2 = \Big(
 u + \frac\beta\xi\Big( \Big( \frac{\frac12(1-\alpha)}{1-F(u)} \Big)^{-\xi}  -1\Big)
\Big)^2.$$
Unfortunately, there is no explicit expression for $\frac1{\alpha-1}\int_\alpha^1 F_{Z^2}^{-1}(y) \ud y$.

\subsection{Statistical model fitting}\label{sec:stats}

\begin{figure}[t!]
\hspace*{-1cm}  \includegraphics[scale=0.52]{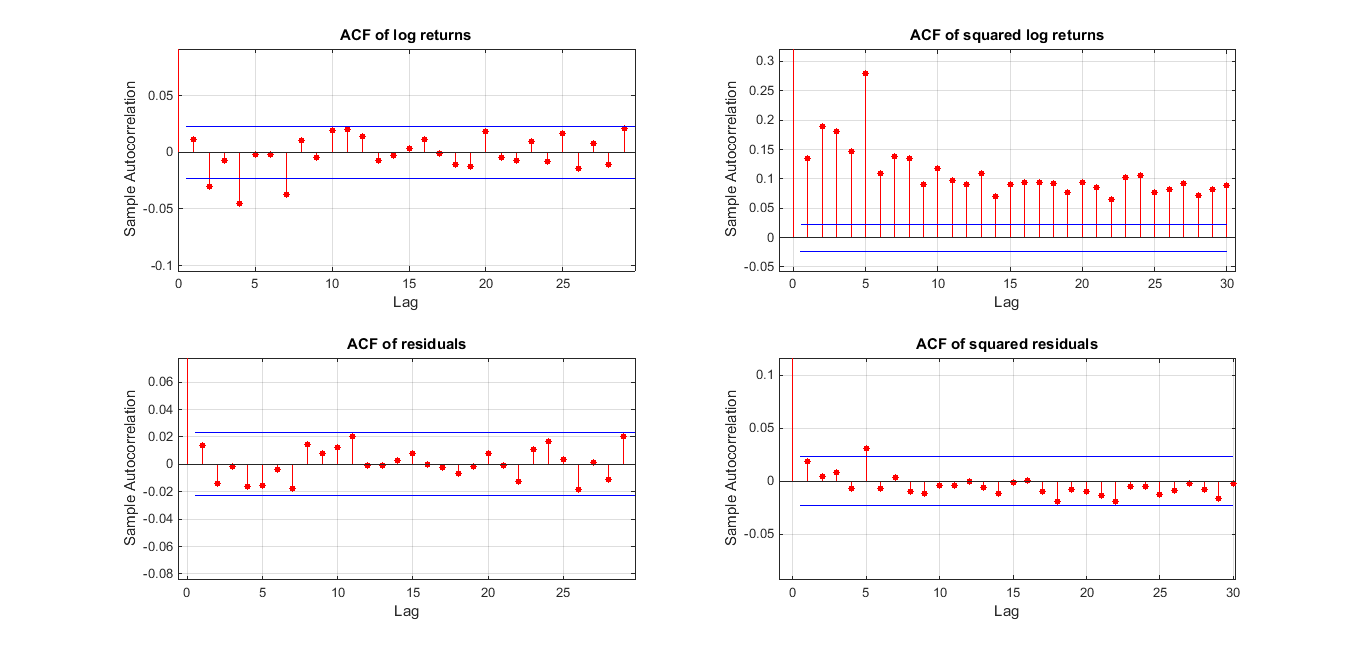}
	\caption{Motorola stock price analysis: sample autocorrelation functions for loss data (top) and residuals after fitting a GARCH$(1,1)$ model (bottom).}
	\label{fig:motorola_acfs}
\end{figure}

In this section we apply the theory and formulas derived previously to a data set. We choose the historical daily closing prices of the Motorola stock from 1st March 1985 until 15th October 2014 as this data set provides several canonical features of financial time series. We transform prices into negative log-returns; i.e., into losses, and fit the \GARCH$(1,1)$ parameters using  Quasi Maximum Likelihood Estimation (QMLE)  (e.g. \cite{FZ}, Chapter~7). The parameter estimates can be found in Table \ref{table:garch_parameters}, and
the outcome is depicted in Figure~\ref{fig:motorola_returns_vola_resids}. 

\medskip 

\begin{figure}[t!]
	\centering
   \includegraphics[width=7cm, height=5.5cm]{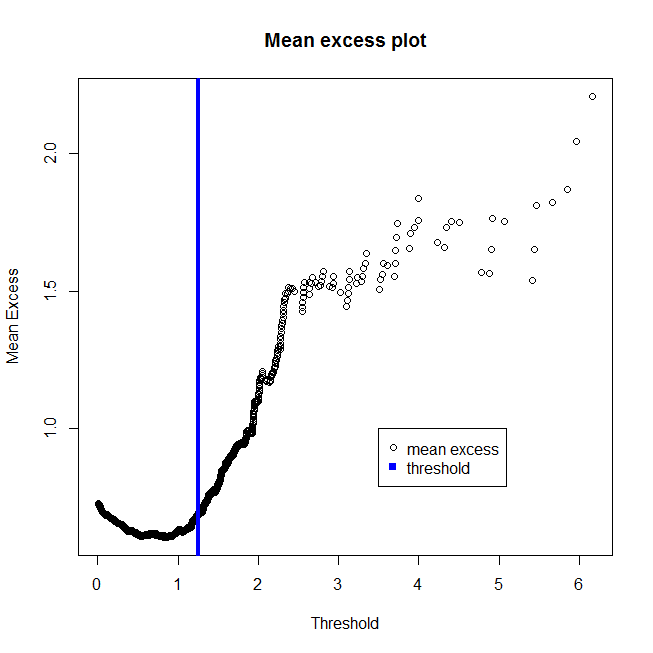} \hspace*{0.1cm}
    \includegraphics[width=7cm, height=5.5cm]{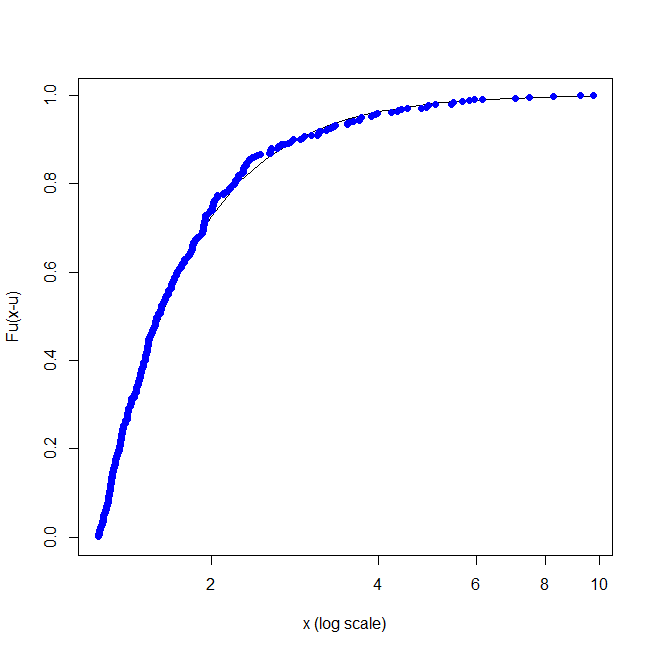}
	\caption{Fit of the Generalized Pareto Distribution. Left: mean excess plot of the positive residuals, and QQ-plot of the threshold exceeding residuals against the fitted GPD. Right: excess distribution $F_u(x-u)$ from the fitted GPD model (solid line) against the empirical estimates of excess probabilities (dotted points).}
	\label{fig:motorola_qqplot_expo_gpd}
\end{figure}

\begin{table}[h!]                      
\centering                             
\begin{tabular}{c|ccc}                                                 
Parameter&Value&Standard error&\\
\hline
$\wh a_0$&2e-07&1.09e-07&\\
$\wh a_1$&0.0451&0.0014&\\
$\wh{b}$&0.9531&0.0013&\\
 \hline
\end{tabular}                                 
\caption{Estimated GARCH$(1,1)$ parameters by QMLE.}               
\label{table:garch_parameters}             
\end{table}

We see in the middle plot of Figure~\ref{fig:motorola_returns_vola_resids} major clustering of volatility in October 1987 (Black Monday), in a pronounced period between 2000 until 2002 (Dot-com bubble and wake of 9/11 attacks) and in a longer lasting period following the financial crisis between 2008 until 2010. 

In a next step we examine the sample autocorrelation functions of the loss data and the residuals after fitting a \GARCH$(1,1)$ model. 
In Figure~\ref{fig:motorola_acfs} the bottom plots depict the acf of the residuals  and the squared residuals and is supportive for the our assumption of i.i.d. GARCH residuals $Z_t$. 
This is also reflected in several runs of the Ljung-Box for various lags for the residuals. The residuals also pass the augmented Dickey-Fuller and the KPSS stationarity tests.

\begin{figure}[b!]
	\centering
  \includegraphics[height=5cm]{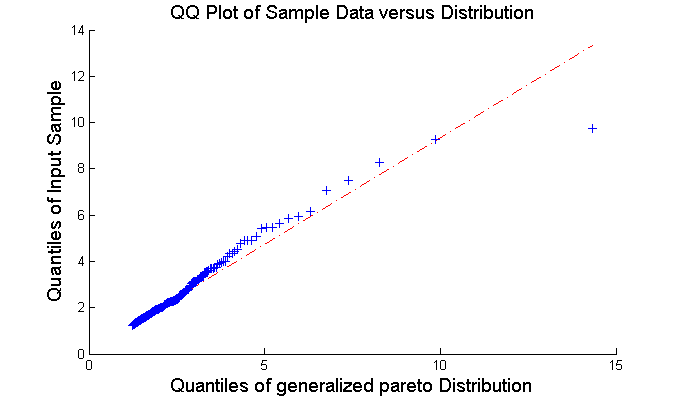}
	\caption{QQ-plot of the threshold exceeding residuals against the fitted GPD.}
	\label{fig:motorola_qqplot_expo_gpd2}
\end{figure}

\medskip

As explained in Section~\ref{sec:GPD} we fit a GPD to the upper tail of the residuals. 
We first have to choose a high enough threshold value $u$ and we choose it as the approximate $92\%$ quantile of the residuals.
This is supported by studying the mean excess plot of the nonnegative residuals in Figure~\ref{fig:motorola_qqplot_expo_gpd}: the $92\%$ quantile of the residuals (solid blue line) yields a threshold which sufficiently marks the beginning of the linear behaviour of the mean excess plot.
Since the empirical mean excesses are increasing, we may assume that the shape parameter $\xi$ is positive.
This is confirmed by the parameter estimates for $\xi$ and $\beta$.
The Maximum Likelihood Estimators are $\wh\xi= 0.3376$ with a $95\%$ confidence interval $[0.2272, 0.4481]$ and $\wh\beta = 0.4609$ with a $95\%$ confidence interval $[0.4023, 0.5280]$.

\medskip 

In Figure~\ref{fig:motorola_qqplot_expo_gpd}, the right hand plot depicts the GPD fit of the excess distribution $F_u(x-u) = \IP(X\leq x\mid X>u)$ superimposed on empirical estimates of excess probabilities. Note how well the GPD model fits to the empirical estimates of the excess probabilities.

\medskip	

A QQ-plot of the empirical quantiles against the fitted quantiles is depicted in Figure~\ref{fig:motorola_qqplot_expo_gpd2}. 
Note again the good correspondence of the fitted GPD with the empirical estimates.

\subsection{Fitting time-consistent risk measures to data}\label{sec:numerics}

We now compute the corresponding time-consistent risk measures from Sections~\ref{sec:var} and~\ref{sec:avar}. 

\subsubsection{Time consistent VaR estimation}

In a first step, for a single loss position $L_t$ we compute the $m$-day-ahead time-consistent VaR given by Proposition~\ref{prop:garch_var_terminal} for different levels of $\alpha$; i.e., we fix $t$ and consider $\tildevar{t}(L_{t+m})$ for various $m\in\IN$. 
\begin{figure}[b]
	\centering
  \includegraphics[width=6.5cm,height=5.6cm]{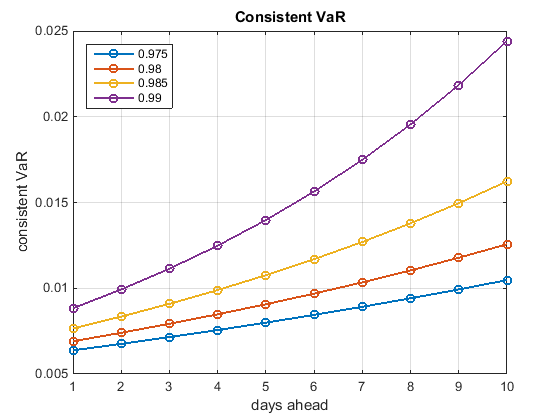}\hspace{0.001cm} \includegraphics[width=9.9cm,height=5.9cm]{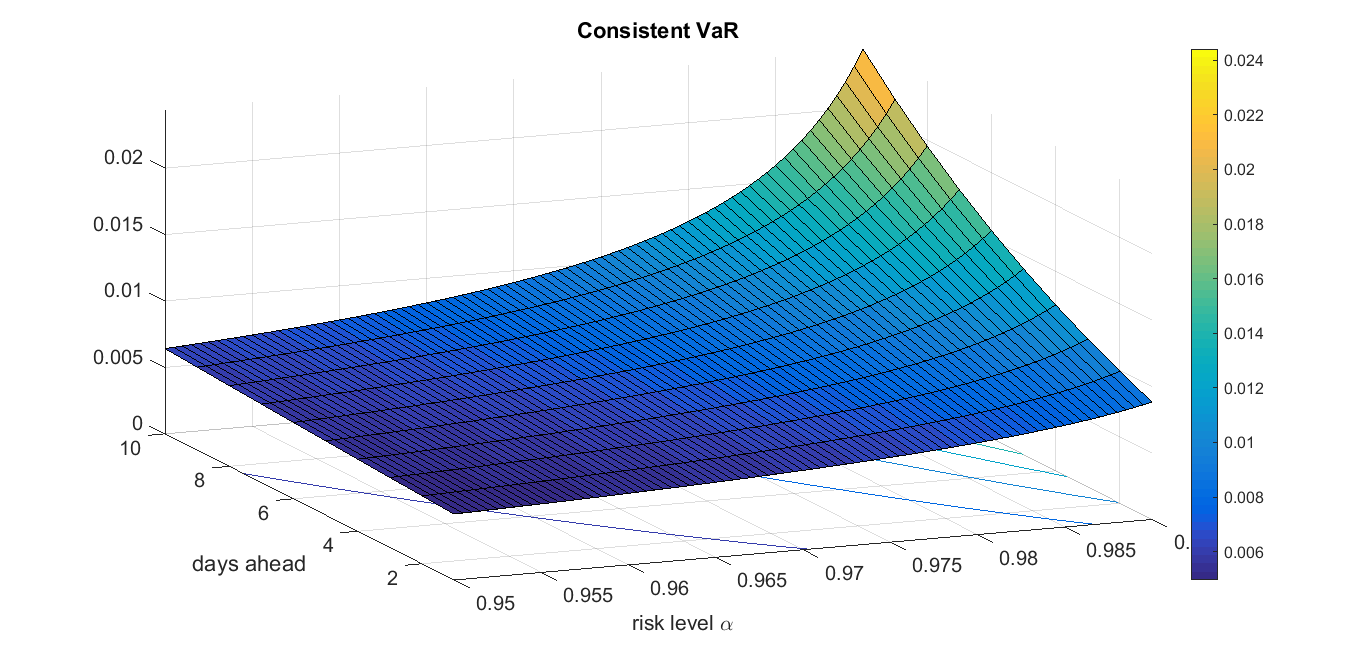}
	\caption{Time-consistent VaR estimation for single loss positions $m$ days ahead. }
	\label{fig:motorola_var_consistent}
\end{figure}
In Figure~\ref{fig:motorola_var_consistent} we plot $\tildevar{t}(L_{t+m})$ for $m=1,\ldots,10$. 

\medskip

Once the single time consistent risk measures $\tildevar{t}(L_{t+m})$ are computed, we simultaneously get the risk measure of the aggregated losses over $m$ days from Proposition~\ref{prop:var_linear} by aggregation; i.e., 
\[
\tildevar{t}\big( \sum_{j=1}^m L_{t+j} \big) = \sum_{j=1}^m \tildevar{t}(  L_{t+j} ).
\]
\begin{figure}
	\centering
  \includegraphics[width=6.5cm,height=5.6cm]{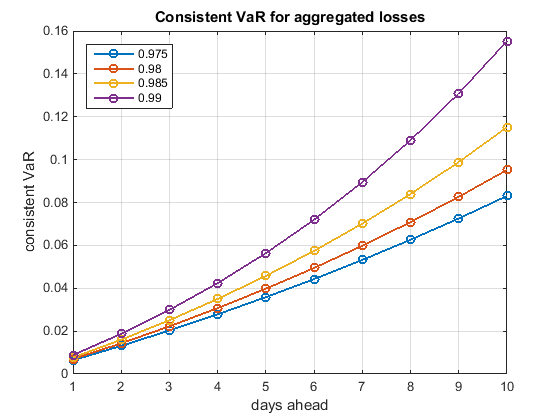}\hspace{0.001cm} \includegraphics[width=9.9cm,height=5.6cm]{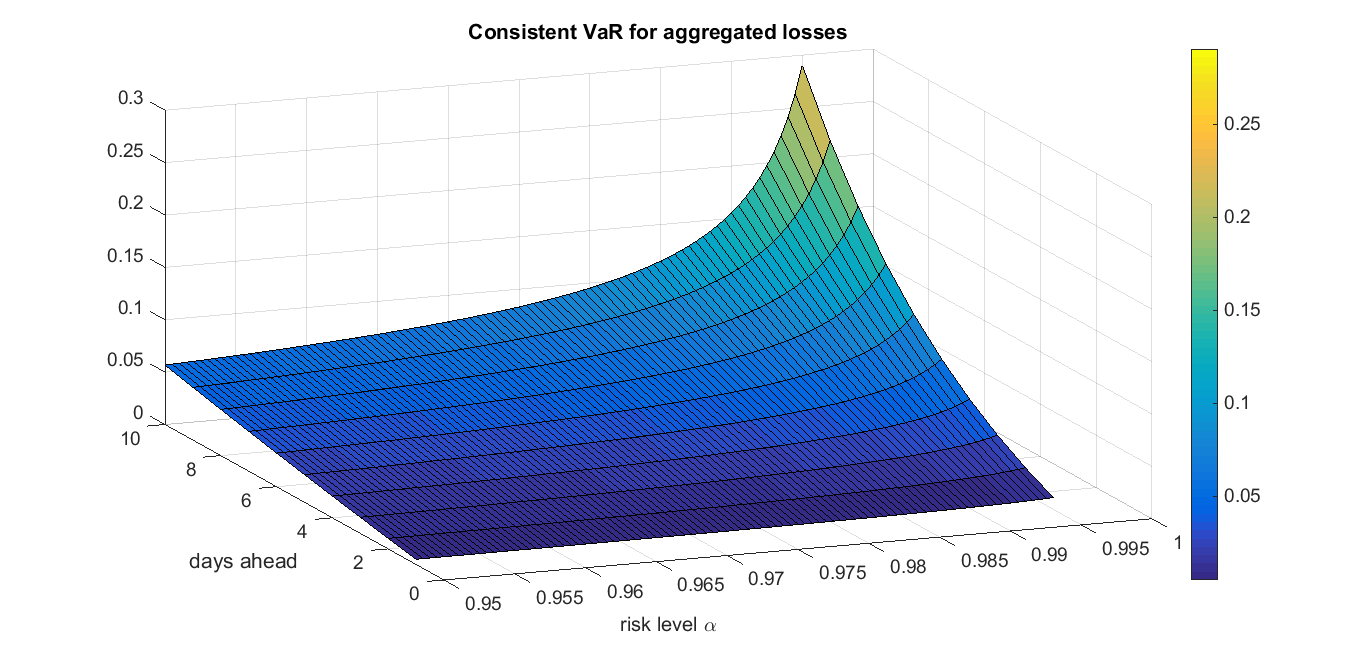}
	\caption{Time-consistent VaR estimation for aggregated loss positions $m$ days ahead.}
	\label{fig:motorola_var_consistent_aggr}
\end{figure}
In Figure~\ref{fig:motorola_var_consistent_aggr} we plot $\tildevar{t}(\sum_{j=1}^m L_{t+j})$ for $m=1,\ldots,10$. 
Table~\ref{table:var_single_aggr} shows the values of $\tildevar{}$ for single losses $L_{t+m}$ and aggregated losses $\sum_{j=1}^m L_{t+j}$ for various levels of $\alpha$ and $m=1,\ldots,10$. 

\begin{table}[htb!]                      
\centering                             
\begin{tabular}{c|cccc||cccc}                                                 
&&single loss&&&&aggr. loss&&\\ 
\hline
$\alpha/m$ & 97.5\% & 98\% & 98.5\% & 99\% & 97.5\% & 98\% & 98.5\% & 99\% \\        
 \hline
1 & 0.0064 & 0.0069 & 0.0077 & 0.0088 & 0.0064 & 0.0069 & 0.0077 & 0.0088 \\
2 & 0.0068 & 0.0074 & 0.0083 & 0.0099 & 0.0132 & 0.0143 & 0.0160 & 0.0187 \\
3 & 0.0072 & 0.0079 & 0.0091 & 0.0111 & 0.0204 & 0.0222 & 0.0251 & 0.0297 \\
4 & 0.0076 & 0.0085 & 0.0099 & 0.0125 & 0.0280 & 0.0307 & 0.0350 & 0.0422 \\
5 & 0.0080 & 0.0091 & 0.0108 & 0.0140 & 0.0360 & 0.0398 & 0.0458 & 0.0562 \\
6 & 0.0084 & 0.0097 & 0.0117 & 0.0156 & 0.0444 & 0.0495 & 0.0575 & 0.0718 \\
7 & 0.0089 & 0.0103 & 0.0127 & 0.0175 & 0.0533 & 0.0598 & 0.0702 & 0.0893 \\
8 & 0.0094 & 0.0110 & 0.0138 & 0.0196 & 0.0627 & 0.0708 & 0.0840 & 0.1089 \\
9 & 0.0099 & 0.0118 & 0.0150 & 0.0218 & 0.0726 & 0.0826 & 0.0990 & 0.1307 \\
10 & 0.0105 & 0.0126 & 0.0162 & 0.0244 & 0.0831 & 0.0952 & 0.1152 & 0.1551 \\
\hline
\end{tabular}                                 
\caption{Values for $\tildevar{}$ for single losses $L_{t+m}$ and aggregated losses $\sum_{j=1}^m L_{t+j}$. }               
\label{table:var_single_aggr}             
\end{table}

\subsubsection{Time consistent AVaR estimation}

In a second step, we compute the approximate upper and lower AVaR bounds for single  loss position $L_t$ and we compute the $m$-day-ahead for different levels of $\alpha$; i.e., we fix $t$ and consider $\avarlow{t}\big( L_{t+m} \big)$ and $\avarup{t}\big( L_{t+m} \big)$ for various $m\in\IN$.
The risk measure of the aggregated losses over $m$ days we obtain from Propositions~\ref{prop:avar_jensen}.

\begin{table}[htb!]                                                                                              
\centering                                                                                                     
\begin{tabular}{c|cccc||cccc}                                                 
&&lower&bound &&&upper&bound&\\
\hline
$\alpha/m$ & 97.5\% & 98\% & 98.5\% & 99\% & 97.5\% & 98\% & 98.5\% & 99\% \\      
 \hline  
1 & 0.0098 & 0.0106 & 0.0118 & 0.0135 & 0.0098 & 0.0106 & 0.0118 & 0.0135 \\
2 & 0.0115 & 0.0127 & 0.0145 & 0.0177 & 0.0131 & 0.0146 & 0.0168 & 0.0206 \\
3 & 0.0134 & 0.0152 & 0.0180 & 0.0231 & 0.0173 & 0.0199 & 0.0239 & 0.0313 \\
4 & 0.0157 & 0.0182 & 0.0222 & 0.0302 & 0.0230 & 0.0271 & 0.0340 & 0.0475 \\
5 & 0.0183 & 0.0217 & 0.0275 & 0.0394 & 0.0304 & 0.0370 & 0.0483 & 0.0721 \\
6 & 0.0214 & 0.0260 & 0.0340 & 0.0515 & 0.0402 & 0.0504 & 0.0686 & 0.1093 \\
7 & 0.0250 & 0.0311 & 0.0420 & 0.0672 & 0.0532 & 0.0687 & 0.0974 & 0.1658 \\
8 & 0.0291 & 0.0371 & 0.0520 & 0.0878 & 0.0704 & 0.0936 & 0.1384 & 0.2515 \\
9 & 0.0340 & 0.0444 & 0.0643 & 0.1146 & 0.0932 & 0.1276 & 0.1966 & 0.3813 \\
10 & 0.0398 & 0.0531 & 0.0795 & 0.1497 & 0.1232 & 0.1738 & 0.2792 & 0.5783 \\
\hline
\end{tabular}                                                                                                                   
\caption{Values for $\avarlow{}$ and $\avarup{}$ for single losses $L_{t+m}$ with $m=1,\ldots,10$.} 
\label{table:val_low_jensen}                                                                                     
\end{table} 

\bigskip

\begin{table}[ht]                                                          
\centering                                                                 
\begin{tabular}{c|cccc||cccc}                                                 
&&aggr. $\avarlow{}$&&&&aggr. $\avarup{}$&&\\
&&&&&&&&\\
$\alpha/m$ & 97.5\% & 98\% & 98.5\% & 99\% & 97.5\% & 98\% & 98.5\% & 99\% \\        
 \hline
1 & 0.0098 & 0.0106 & 0.0118 & 0.0135 & 0.0098 & 0.0106 & 0.0118 & 0.0135 \\
2 & 0.0213 & 0.0233 & 0.0263 & 0.0312 & 0.0229 & 0.0252 & 0.0286 & 0.0341 \\
3 & 0.0347 & 0.0385 & 0.0443 & 0.0543 & 0.0402 & 0.0451 & 0.0525 & 0.0654 \\
4 & 0.0504 & 0.0567 & 0.0665 & 0.0845 & 0.0632 & 0.0722 & 0.0865 & 0.1129 \\
5 & 0.0687 & 0.0784 & 0.0940 & 0.1239 & 0.0936 & 0.1092 & 0.1348 & 0.1850 \\
6 & 0.0901 & 0.1044 & 0.1280 & 0.1754 & 0.1338 & 0.1596 & 0.2034 & 0.2943 \\
7 & 0.1151 & 0.1355 & 0.1700 & 0.2426 & 0.1870 & 0.2283 & 0.3008 & 0.4601 \\
8 & 0.1442 & 0.1726 & 0.2220 & 0.3304 & 0.2574 & 0.3219 & 0.4392 & 0.7116 \\
9 & 0.1782 & 0.2170 & 0.2863 & 0.4450 & 0.3506 & 0.4495 & 0.6358 & 1.0929 \\
10 & 0.2180 & 0.2701 & 0.3658 & 0.5947 & 0.4738 & 0.6233 & 0.9150 & 1.6712 \\
\hline
\end{tabular}                                                              
\caption{{$\avarlow{}$ and} $\avarup{}$ for aggregated losses $\sum_{j=1}^m L_{t+j}$.}                                                   
\label{table:avar_aggr}                                                 
\end{table}

\subsection{Conclusions}

Obviously, the interpretation for the dynamic time-consistent (A)VaR differs considerably to that of the static (A)VaR: the dynamic (A)VaR evolves via the composition of the static (A)VaR over time. 
This results in a much more conservative risk measurement as the risky positions that are due far in the future not only enter the risk assessment through their own dynamics at the future maturity but rather enter through their risk assessment along any time point up to maturity. This has the intended effect that risky effects which arise \underline{until maturity} are cushioned at any time. 
As one would expect, the higher safety margins $\alpha$ are required the more dramatic is the increase of safety capital when more days-ahead risk management is envisioned. 

\medskip

Table~\ref{table:var_single_aggr} contrasts single and aggregated time-consistent VaR values for different $\alpha$ and maturities $m$.
It shows convincingly, how much higher capital reserves are needed to guarantee uniform safety at the same level over the whole time to maturity. 
Already at a level of $\alpha=0.975$ the time-consistent aggregate loss VaR more than doubles from maturity 1 to 2 and multiplies by a factor of more than 12 to maturity 10. 
There is a high price to pay to safeguard against all uncertainties, which may lie in the far future.

\medskip

For a comparison recall a standard industry method to estimate a 10-day VaR based on the central limit theorem, or normality of future losses (e.g. \citet{mcneilfreyem}, Section~{2.3.4}).
Recall that the loss from time $t$ over the next $m$ periods can be written as the sum over the negative returns during this period. If returns are iid with mean zero and variance $\sigma^2$ (or even normally distributed), then this sum is again (approximately) normally distributed with mean zero and variance $m \sigma^2$. 
This motivates the estimation of the sum of losses over $m$ days by the estimation of the 1-day VaR and multiply it by $\sqrt{m}$.

\medskip

Let us compare the values for $\tildevar{t}$ from Table~\ref{table:var_single_aggr}  with this industry standard.
We find for $\alpha=0.975$ a 10-day VaR of $0.0064\,\sqrt{10} = 0.0202$ (which we have to compare with the time-consistent $\tildevar{t}(\sum_{j=1}^{10} L_{t+j})=0.0830$, which is more than 4 times as large), and for $\alpha=0.99$ a 10-day VaR of $0.0088\,\sqrt{10}=0.0278$ (which we have to compare with the time-consistent $\tildevar{t}(\sum_{j=1}^{10} L_{t+j})=0.1553$, which is more than 5 times as large).  One reason for this huge difference is the well-known fact that GARCH losses do not scale with $\sqrt{m}$, but scaling depends strongly on the parameters; cf. \cite{FZ}, Chapter~4. 
However, this alone does not explain the huge difference between the simple industry standard and the time-consistent VaR for the aggregated losses.

\medskip

Due to their construction the composed VaR and AVaR for aggregated future losses produce much more conservative reserve requirements than the standard VaR and AVaR for the same level of $\alpha$. 
As an implication the standard reserving requirement of excessively high levels of $\alpha$ like $99\%$ or $99.9\%$ covering $100$- or $1000$-year events may be put to a test taking into consideration reduced levels of $\alpha$, e.g. in the bandwidth $90\% - 97.5\%$. 
The reduction of such extremely high levels would also be very reasonable from a statistical point of view as lower level quantiles give rise to much more reliable estimators.


\appendix

\section{Proofs of Section~3}\label{AppA}

\noindent
{\em Proof of Theorem~\ref{prop:garch_var_terminal}  }\quad 
We proceed by backward induction. Firstly, by \eqref{varL}, at $T-1$ we have the 1-day-ahead-VaR 
\[
\tildevar{T-1}(L_T) = F_Z^{-1}(\al)~ \sigma_T 
\] 
which agrees with \eqref{eq:garch_var} for $t=T-1$. Assume that \eqref{eq:garch_var} holds for all $s=t,\ldots,T-1$. We have
\baligns
\tildevar{t-1}(L_T) &= \var{t-1}\big( \tildevar{t}(L_T) \big)\\
&= \var{t-1}\big( F_Z^{-1}(\al) \sqrt{ \cP^T_{t}\big( a_1 F_{Z^2}^{-1}(\alpha) + b \big) } \big)\\
&= \essinf\big\{m \in L^0(\cF_{t-1}): ~ \IP( F_Z^{-1}(\al) \sqrt{ \cP^T_{t}\big( a_1 F_{Z^2}^{-1}(\alpha) + b \big) } \le m  ~|~\cF_{t-1}) \geq \alpha \big\}.
\ealigns
We denote by $\IP_{t-1}$ the conditional probability with respect to $\cF_{t-1}$.
Note that 
\baligns
\lefteqn{\IP_{t-1}( F_Z^{-1}(\al) \sqrt{ \cP^T_{t}\big( a_1 F_{Z^2}^{-1}(\alpha) + b \big) } \leq m) 
= \IP_{t-1}( F_Z^{-1}(\al)^2 \cP^T_{t}\big( a_1 F_{Z^2}^{-1}(\alpha) + b \big) \leq m^2)}\\
&= \IP\Big( \sigma_{t+1}^2 (a_1F_{Z^2}^{-1}(\alpha) +b)^{T-t-1} \leq \Big(\frac{m}{F_Z^{-1}(\al)}\Big)^2 - a_0 \sum_{k=0}^{T-t-2} (a_1F_{Z^2}^{-1}(\alpha) +b)^k  ~|~\cF_{t-1} \Big).
\ealigns
Using the definition of the GARCH volatility \eqref{eq:garch_loss} for $\si_{t+1}^2$ this can be continued by
\baligns
&\IP_{t-1}( F_Z^{-1}(\al) \sqrt{ \cP^T_{t}\big( a_1 F_{Z^2}^{-1}(\alpha) + b \big) } \leq m) \\
&\qquad =\IP_{t-1}\Big( a_1 \sigma_t^2 Z_t^2 \leq \frac{1}{(a_1F_{Z^2}^{-1}(\al) +b)^{T-t-1}}\left( \left(\frac{m}{F_Z^{-1}(\al)}\right)^2 - a_0 \sum_{k=0}^{T-t-2} (a_1F_{Z^2}^{-1}(\alpha) +b)^k \right) -a_0 - b\sigma_t^2 \Big) \\
\ealigns
Since $\si_t$ is $\cF_{t-1}$-measurable and $Z_t$ is independent of $\cF_{t-1}$ we conclude that 
\baligns
\tildevar{t-1}(L_T) 
&= F_Z^{-1}(\al) \sqrt{ \big( a_0+ (a_1 F_{Z^2}^{-1}(\alpha)   + b) \sigma_t^2 \big) \big( a_1 F_{Z^2}^{-1}(\alpha) +b \big)^{T-t-1} + a_0 \sum_{k=0}^{T-t-2}(a_1F_{Z^2}^{-1}(\alpha)+b)^k }\\
&= F_Z^{-1}(\al) \sqrt{ a_0 \sum_{k=0}^{T-t-1}(a_1F_{Z^2}^{-1}(\alpha)+b)^k + \sigma_t^2(a_1F_{Z^2}^{-1}(\alpha)+b)^{T-t}}\\
&= F_Z^{-1}(\al) \sqrt{\cP^T_{t-1}(a_1F_{Z^2}^{-1}(\alpha)+b)}.
\ealigns
This finishes the proof.
\halmos

\section{Proofs of Section~4}\label{AppB}

We need the following lemma. 

\begin{lemma}\label{lemma:var_lemma}
For $t=0,\ldots,T-2$ assume that $f_t: \IR \to \IR$ is a $\cF_t$-measurable,  and strictly increasing mapping. Then we have
\[
\{\omega \in \Omega: f_t(Z_{t+1}^2) > \var{t}\big( f_t(Z_{t+1}^2) \big) \} \, = \, \{\omega \in \Omega: Z^2_{t+1} > F_{Z^2}^{-1}(\alpha) \}.
\]
In particular,
$$\{\omega \in \Omega: \sigma_{t+2} > \var{t}(\sigma_{t+2}) \} \, = \, \{\omega \in \Omega: Z^2_{t+1} > F_{Z^2}^{-1}(\alpha) \}.$$
\end{lemma}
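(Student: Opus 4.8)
The plan is to first compute $\var{t}\big(f_t(Z_{t+1}^2)\big)$ in closed form and then obtain the two set identities by exploiting strict monotonicity of $f_t$. Since $f_t$ is $\cF_t$-measurable and $Z_{t+1}$ is independent of $\cF_t$, the very substitution argument that produced \eqref{varL} in Example~\ref{ex:1_day} applies here: conditioning on $\cF_t$ freezes $f_t$ to a deterministic strictly increasing function, so for every $\cF_t$-measurable candidate $m$ one has $\IP\big( f_t(Z_{t+1}^2) \le m \mid \cF_t \big) = F_{Z^2}\big( f_t^{-1}(m) \big)$, where $f_t^{-1}$ is the ($\cF_t$-measurable) generalized inverse of $f_t$. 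Hence the constraint $\IP( f_t(Z_{t+1}^2) \le m \mid \cF_t) \ge \alpha$ is equivalent to $f_t^{-1}(m) \ge F_{Z^2}^{-1}(\alpha)$, i.e.\ to $m \ge f_t\big(F_{Z^2}^{-1}(\alpha)\big)$, and taking the essential infimum gives
\[
\var{t}\big( f_t(Z_{t+1}^2) \big) = f_t\big( F_{Z^2}^{-1}(\alpha) \big).
\]

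With this formula in hand the first assertion would be immediate: for $\IP$-a.e.\ $\omega$ the map $x\mapsto f_t(\omega,x)$ is strictly increasing, so
\[
\{ f_t(Z_{t+1}^2) > \var{t}(f_t(Z_{t+1}^2)) \} = \{ f_t(Z_{t+1}^2) > f_t(F_{Z^2}^{-1}(\alpha)) \} = \{ Z_{t+1}^2 > F_{Z^2}^{-1}(\alpha) \}.
\]
For the ``in particular'' part I would note that \eqref{eq:garch_loss} gives $\sigma_{t+2}^2 = a_0 + a_1 L_{t+1}^2 + b\sigma_{t+1}^2 = a_0 + \sigma_{t+1}^2(a_1 Z_{t+1}^2 + b)$, so that $\sigma_{t+2} = f_t(Z_{t+1}^2)$ with $f_t(x) := \sqrt{a_0 + \sigma_{t+1}^2(a_1 x + b)}$; since $\sigma_{t+1}$ is $\cF_t$-measurable and $a_0,a_1,b>0$, this $f_t$ is $\cF_t$-measurable and strictly increasing on $[0,\infty)$, so the claimed identity follows from the general case applied to this particular $f_t$.

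The routine but slightly delicate step is the first display: one must justify passing from the conditional probability of a random (yet $\cF_t$-measurable) strictly increasing image of $Z_{t+1}^2$ to the unconditional distribution function of $Z_{t+1}^2$, and check that the generalized inverse $f_t^{-1}$ is itself $\cF_t$-measurable, so that $f_t\big(F_{Z^2}^{-1}(\alpha)\big)$ is a legitimate competitor in the essential infimum attaining the $\alpha$-constraint. This is exactly the manipulation already carried out in Examples~\ref{ex:1_day} and~\ref{ex:counter_ex}, where strict monotonicity in $Z_{t+1}^2$ was used in the same way, so I would keep it short and refer back to those computations. One minor remark worth a line: continuity of $F_Z$ (Assumption~B) makes $F_{Z^2}$ continuous, hence $F_{Z^2}\big(F_{Z^2}^{-1}(\alpha)\big)=\alpha$; but only the inequality ``$\ge\alpha$'' is actually needed to verify the constraint, so even the bare monotonicity assumption would suffice for that half.
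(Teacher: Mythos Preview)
Your proof is correct and follows essentially the same route as the paper: compute $\var{t}\big(f_t(Z_{t+1}^2)\big)=f_t\big(F_{Z^2}^{-1}(\alpha)\big)$ via invertibility of $f_t$ and independence of $Z_{t+1}$ from $\cF_t$, then use strict monotonicity to convert the inequality of images into the inequality $Z_{t+1}^2>F_{Z^2}^{-1}(\alpha)$, and specialize to $f_t(x)=\sqrt{a_0+\sigma_{t+1}^2(a_1 x+b)}$ for the ``in particular'' part. If anything, your added remarks on the $\cF_t$-measurability of $f_t^{-1}$ and on why only the inequality $F_{Z^2}\big(F_{Z^2}^{-1}(\alpha)\big)\ge\alpha$ is needed are more careful than the paper's own version.
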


\begin{proof}
Due to the assumptions on $f_t$ it is invertible. According to the definition of $\var{t}$ we have
\baligns
\var{t}\big( f_t(Z^2_{t+1}) \big) &= \essinf\big\{m \in L^0(\cF_t): ~ \IP( f_t(Z^2_{t+1}) \le m  ~|~\cF_t) \geq \alpha \big\}\\
&= \essinf\big\{m \in L^0(\cF_t): ~ \IP( Z^2_{t+1} \le f_t^{-1}(m) ) \geq \alpha \big\}\\
&= \essinf\big\{m \in L^0(\cF_t): ~ F_{Z^2}\big( f_t^{-1}(m) \big) \geq \alpha \big\}\\
&= \essinf\big\{m \in L^0(\cF_t): ~ m \geq f_t\big( F_{Z^2}^{-1}(\alpha) \big) \big\}\\
&=  f_t\big( F_{Z^2}^{-1}(\alpha) \big)
\ealigns
where the third line follows from the independence between $Z_{t+1}$ and $\cF_t$. Thus
\[
f_t(Z_{t+1}^2) > \var{t}\big( f_t(Z^2_{t+1}) \big) = f_t\big( F_{Z^2}^{-1}(\alpha) \big)
\]
holds if and only if $Z^2_{t+1} > F_{Z^2}^{-1}(\alpha)$. For the second part, note that by
\eqref{var1},
\baligns
\var{t}(\sigma_{t+2}) &= \sqrt{ a_0 + \sigma_{t+1}^2 \big( a_1 F_{Z^2}^{-1}(\alpha) + b \big)}.
\ealigns
From the definition of the \GARCH(1,1) model \eqref{eq:garch_loss} we conclude
\baligns
\sigma_{t+2} = \sqrt{a_0 + \sigma_{t+1}^2 \big( a_1 Z_{t+1}^2 + b \big) } = f_t(Z^2_{t+1}) >  \var{t}\big( f_t(Z^2_{t+1}) \big)  = \sqrt{a_0 + \sigma_{t+1}^2 \big( a_1 F_{Z^2}^{-1}(\alpha) + b \big) } = \var{t}(\sigma_{t+2})
\ealigns
if and only if  $Z^2_{t+1} > F_{Z^2}^{-1}(\alpha)$. 
\end{proof}

\medskip

\noindent
{\em Proof of Theorem~\ref{prop:consistent_avar}  }\quad
We apply again backward induction. 
From \eqref{eq:avar} and Example~\ref{ex:1_day_avar} we have
$$\tildeavar{T-1}(L_T^2) = \sigma_T^2 \, \frac{1}{1-\alpha}  \int_\alpha^1 F_{Z^2}^{-1}(u)  \udu,$$ 
which agrees with \eqref{eq:garch_avar} for $t=T-1$. 
For simplicity we write $\ov\kappa_2 =  \frac{1}{1-\alpha}  \int_\alpha^1 F_{Z^2}^{-1}(u)  \udu$.
Now assume that \eqref{eq:garch_avar} holds for all $s=t,\ldots,T-1$. Then it remains to prove \eqref{eq:garch_avar} for $t-1$. 
We have by \eqref{eq:avar} and \eqref{eq:garch_loss}
\baligns
\tildeavar{t-1}(L_T^2) &= \avar{t-1}\big( \tildeavar{t}(L_T^2) \big)
= \avar{t-1}\left(  \ov\kappa_2 ~ \cP^T_{t}\Big( a_1\ov\kappa_2+ b \Big) \right).
\ealigns
We denote $G_t:= \cP^T_{t}( a_1\ov\kappa_2 + b )$, which is a measurable function of $\sigma_{t+1}$, and take the constant out of the expectation, which 
yields
\baligns
\tildeavar{t-1}(L_T^2) 
&= \ov\kappa_2 ~  \IE_{t-1} \, \big[ G_t  ~|~ G_t > \var{t-1}(G_t) \big].
\ealigns
Now note that by Definition~\ref{def:var}
\baligns
\var{t-1}(G_t) 
&=  \essinf\{ m \in L^0(\cF_{t-1}): \IP\big( G_t \leq m ~|~ \cF_{t-1} \big) > \alpha \}.
\ealigns
We denote by $\IP_{t-1}$ the conditional probability with respect to $\cF_{t-1}$.
We compute further, using the definition of the GARCH volatility \eqref{eq:garch_loss} for $\si_{t+1}^2$
\baligns
\IP_{t-1}\big( G_t \leq m \big)
&= \IP_{t-1}\Big( a_0 \sum_{k=0}^{T-t-2} \big( a_1\ov\kappa_2 + b \big)^k + \sigma_{t+1}^2 \big( a_1 \ov\kappa_2 + b \big)^{T-t-1} \leq m \Big)\\
&\qquad= \IP_{t-1}\Big( \sigma_{t+1} \leq\frac{m - a_0 \sum_{k=0}^{T-t-2} ( a_1 \ov\kappa_2+ b )^k }{( a_1\ov\kappa_2 + b )^{T-t-1}}\Big)\\
&\qquad= \IP\Big( Z_t^2 \leq  \Big( \frac{m - a_0 \sum_{k=0}^{T-t-2} ( a_1\ov\kappa_2+ b)^k }{(a_1\ov\kappa_2 + b )^{T-t-1}} - a_0 - b\sigma_t \Big) \frac1{ a_1 \sigma_t^2} \Big),
\ealigns
where in the last line we have used that $\si_t$ is $\cF_{t-1}$-measurable  and the independence of $Z_t$ and $\cF_{t-1}$. 
We can thus conclude that
\baligns
\var{t-1}(G_t) &= \Big( a_0 + \sigma_t^2 (a_1  F_{Z^2}^{-1}(\alpha) + b) \Big) \big( a_1\ov\kappa_2 + b \big)^{T-t-1} + a_0 \sum_{k=0}^{T-t-2} \big( a_1\ov\kappa_2 + b \big)^k.
\ealigns
From Lemma~\ref{lemma:var_lemma} we know that $\{ G_t > \var{t}(G_t)\} =\{Z^2_t > F_{Z^2}^{-1}(\alpha)\}$. 
Hence, it follows from the independence of $Z_t$ and $\cF_{t-1}$ that 
\balign\label{eq:hilferuf001}
\IE_{t-1} \, \big[ G_t  ~|~ G_t > \var{t-1}(G_t) \big] &= \IE_{t-1} \big[ G_t  ~|~ Z^2_t > F_{Z^2}^{-1}(\alpha) \big]
= \frac{1}{1-\alpha}~\IE_{t-1} \big[ G_t  \1_{\{ Z^2_t > F_{Z^2}^{-1}(\alpha) \} } \big].
\ealign
Moreover, we calculate 
\baligns
\lefteqn{\IE_{t-1} \big[ G_t  \1_{\{ Z^2_t > F_{Z^2}^{-1}(\alpha) \} } \big]}\\
&= (1-\alpha)  a_0 \sum_{k=0}^{T-t-2} \big(a_1\ov\kappa_2 + b \big)^k  
 + \big( a_1 \ov\kappa_2+ b \big)^{T-t-1} \int_{F_{Z^2}^{-1}(\alpha)}^\infty \big( a_0 +\sigma_t^2( a_1 u + b ) \big) \ud F_{Z^2}(u)\\
&= (1-\alpha) a_0 \sum_{k=0}^{T-t-2} \big(  a_1 \ov\kappa_2 + b \big)^k 
 + \big( a_1 \ov\kappa_2+ b \big)^{T-t-1} \int_\alpha^1 \big( a_0 + \sigma_t^2 (a_1 F_{Z^2}^{-1}(u) + b) \big) \udu\\
&= (1-\alpha) a_0 \sum_{k=0}^{T-t-1} \big( a_1\ov\kappa_2 + b \big)^k 
 + \big(  a_1\ov\kappa_2+ b \big)^{T-t-1}\big(  a_1\ov\kappa_2 +(1-\alpha)b \big) \sigma_t^2,
\ealigns
which in combination with \eqref{eq:hilferuf001} yields
\baligns
\IE_{t-1} \big[ G_t  ~|~ G_t > \var{t-1}(G_t) \big]
& = a_0 \sum_{k=0}^{T-t-1} \big(  a_1\ov\kappa_2 + b \big)^k + 
\big( a_1\ov\kappa_2+ b \big)^{T-t}\sigma_t^2
= \cP^T_{t-1}\big( a_1\ov\kappa_2 + b \big).
\ealigns
This finally amounts to 
\baligns
\tildeavar{t-1}(L_T^2) &= \frac{1}{1-\alpha} \int_\alpha^1 F_Z^{-1}(u) \udu ~  \IE_{t-1} \big[ G_t  ~|~ G_t > \var{t-1}(G_t) \big]\\
&= \frac{1}{1-\alpha}\int_\alpha^1 F_Z^{-1}(u) \udu ~  \cP^T_{t-1}\big( a_1 \ov\kappa_2 + b \big),
\ealigns
which proves the assertion.
\halmos

\medskip

\noindent
{\em Proof of Proposition~\ref{prop:avar_jensen}  }\quad
A careful proof tracking reveals its similarity to the proof of Theorem~\ref{prop:consistent_avar}.
For simplicity we set $\ov\kappa= \frac{1}{1-\alpha} \, \int_{\alpha}^1 F_{Z}^{-1}(y) \ud y$ and
$\ov\kappa_2 = \frac{1}{1-\alpha} \, \int_{\alpha}^1 F_{Z^2}^{-1}(y) \ud y$.

At $t=T-1$ we have $\avarup{T-1}(L_T) = \ov\kappa \, \sigma_{T}$ which coincides with $\avar{T-1}(L_T)$. 
Since by Definition and \eqref{1} and \eqref{2},
\[
\tildeavar{T-2}(L_T) = \avar{T-2} \big( \tildeavar{T-1}(L_T) \big) = \ov\kappa\, \avar{T-2} \big( \sigma_{T} \big),
\]
we obtain
\baligns
\avar{T-2} \big( \sigma_{T} \big) &= \IE_{T-2} \Big[ \sigma_T ~|~ \sigma_T > \var{T-2}(\sigma_T) \Big]\\
&=\IE_{T-2}\Big[ \sigma_T ~|~ Z^2_{T-1} > F_{Z^2}^{-1}(\alpha)\Big].
\ealigns
by Lemma~\ref{lemma:var_lemma}.
An application of Jensen's inequality yields
\baligns
\IE_{T-2} \Big[ \sigma_T ~|~ Z^2_{T-1} > F_{Z^2}^{-1}(\alpha)\Big] & \leq \Big( \IE_{T-2} \Big[ \sigma_T^2 ~|~ Z^2_{T-1} > F_{Z^2}^{-1}(\alpha)\Big]  \Big)^{1/2}. 
\ealigns
We obtain further
\baligns
\IE_{T-2} \Big[ \sigma_T^2 ~|~ Z^2_{T-1} > F_{Z^2}^{-1}(\alpha)\Big] 
&= \IE_{T-2} \Big[ a_0 +  \sigma_{T-1}^2 (a_1 Z_{T-1}^2 + b )  ~|~ Z^2_{T-1} > F_{Z^2}^{-1}(\alpha)\Big]\\
&\quad= \frac{1}{1-\alpha}\, \int_{F_{Z^2}^{-1}(\alpha)}^\infty \Big( a_0 + \sigma_{T-1}^2\big(a_1  y + b \big) \Big) \ud F_{Z^2}(y)\\
&\quad =  a_0 + \sigma_{T-1}^2 \Big( b +a_1 \frac{1}{1-\alpha}  \int_{F_{Z^2}^{-1}(\alpha)}^\infty y \ud F_{Z^2}(y)\Big)\\
&\quad=  a_0 + \sigma_{T-1}^2 \Big( b+ a_1 \ov\kappa_2\Big),
\ealigns
which amounts to
\baligns
\tildeavar{T-2}(L_T) &\leq \ov\kappa \, \sqrt{ a_0 + \sigma_{T-1}^2 \Big(  a_1\ov\kappa_2 + b \Big) }
= \ov\kappa\, \sqrt{ \cP^T_{T-2}\Big( a_1\ov\kappa + b \Big)  }
= \avarup{T-2}(L_T).
\ealigns
This proves for $t=T-2$ that $\avarup{T-2}(L_T)$ is an upper bound for $\tildeavar{T-2}(L_T)$. 

\medskip

Now assume that $\avarup{s}(L_T) \geq \tildeavar{s}(L_T)$ holds true for $s=T-1,\ldots,t+1$. 
We show next that also 
\[
\avarup{t}(L_T) \geq \tildeavar{t}(L_T).
\]
To this end notice that 
\balign\label{eq:hilfe01}
\tildeavar{t}(L_T) = \avar{t}\big( \tildeavar{t+1}(L_T)  \big) \leq  \avar{t}\big( \avarup{t+1}(L_T)  \big).
\ealign
Moreover, we have by the induction assumption
\baligns
&\avar{t}\big( \avarup{t+1}(L_T)  \big) = \avar{t} \Big(\ov\kappa_2 \,  \sqrt{ \cP^T_{t+1}\Big( \frac{a_1}{1-\alpha} \int_\alpha^1 F_{Z^2}^{-1}(u) \udu + b \Big) } \Big)\\
&\quad = \ov\kappa_2 \,  \IE_t \Big[ \sqrt{ \cP^T_{t+1}\big(  a_1 \ov\kappa_2 + b \big) }~\Big|~ \sqrt{ \cP^T_{t+1}\big( a_1\ov\kappa_2 + b \big) }>\var{t}\Big( \sqrt{ \cP^T_{t+1}\big(  a_1 \ov\kappa_2 + b \big) } \Big) \Big].
\ealigns
By a similar calculation as in the proof of Theorem~\ref{prop:consistent_avar} and Lemma~\ref{lemma:var_lemma}, we can see that the above expression simplifies to
\baligns
&\avar{t}\big( \avarup{t+1}(L_T)  \big) = \ov\kappa_2 \,  \IE_t \Big[ \sqrt{ \cP^T_{t+1}\big(  a_1 \ov\kappa_2 + b \big) } ~\Big|~Z^2_{t+1} > F_{Z^2}^{-1}(\alpha) \Big]\\
&\quad\leq \frac{1}{1-\alpha}\int_\alpha^1 F_Z^{-1}(y) \ud y \,  \Big( \IE_t \Big[  \cP^T_{t+1}\big( a_1 \ov\kappa_2 + b \big) ~\Big|~Z^2_{t+1} > F_{Z^2}^{-1}(\alpha) \Big] \Big)^{1/2},
\ealigns
where the last line follows from Jensen's inequality. Note that
\baligns
\lefteqn{\IE_t \Big[  \cP^T_{t+1}\big( a_1 \ov\kappa_2+ b \big) ~\Big|~Z^2_{t+1} > F_{Z^2}^{-1}(\alpha) \Big]}\\
& = \frac{1}{1-\alpha} \, \int_{F_{Z^2}^{-1}(\alpha)}^\infty \Big( a_0 \sum_{k=0}^{T-t-3} \big( a_1 \ov\kappa_2+ b \big)^k
 + \Big( a_0 + \sigma_{t+1}^2\big( a_1 y + b \big) \Big) \big( a_1 \ov\kappa_2 + b \big)^{T-t-2} \Big) \ud F_{Z^2}(y) \\
& = a_0 \sum_{k=0}^{T-t-2} \big( a_1\ov\kappa_2 + b \big)^k
 + \frac{1}{1-\alpha} \big( a_1 \ov\kappa_2 + b \big)^{T-t-2} \sigma_{t+1}^2 \int_\alpha^1 \big( a_1 F^{-1}_{Z^2}(y) + b \big) \ud y\\
& = a_0 \sum_{k=0}^{T-t-2} \big( a_1 \ov\kappa_2+ b \big)^k + \sigma_{t+1}^2 \big( a_1 \ov\kappa_2+ b \big)^{T-t-1} \\
&= \cP^T_t\big(  a_1 + b \big), 
\ealigns
which implies 
\[
\avar{t}\big( \avarup{t+1}(L_T)  \big) \leq \frac1{1-\alpha} \int_\alpha^1 F_Z^{-1}(y) \ud y\, \sqrt{\cP^T_t\big(  a_1 \ov\kappa_2 + b \big)} = \avarup{t}(L_T).
\]
Finally it follows from \eqref{eq:hilfe01} that $\tildeavar{t}(L_T) \leq \avarup{t}(L_T)$.
\halmos

\bigskip 

\noindent
{\em Proof of Proposition~\ref{prop:avar_lower} }\quad
At $t=T-1$, $\avarlow{T-1}(L_T)$ coincides with $\tildeavar{T-1}(L_T)$. At $t=T-2$ we obtain
\[
\tildeavar{T-2}(L_T) = \avar{T-2} \big( \tildeavar{T-1}(L_T) \big) = \frac{1}{1-\alpha}\int_\alpha^1 F_Z^{-1}(y) \ud y \, \avar{T-2} \big( \sigma_{T} \big).
\] 
For simplicity we write $\ov\kappa = \frac{1}{1-\alpha}\int_\alpha^1 F_Z^{-1}(y) \ud y$.
By continuity of the distribution function inherited from $Z$,
\baligns
\avar{T-2} \big( \sigma_{T} \big) &= \IE_{T-2}\Big[ \sigma_T ~|~ \sigma_T > \var{T-2}(\sigma_T)\Big],
\ealigns
which by Lemma~\ref{lemma:var_lemma} rewrites as
\baligns
\avar{T-2} \big( \sigma_{T} \big) &= \IE_{T-2}\Big[ \sigma_T ~|~ Z^2_{T-1} > F_{Z^2}^{-1}(\alpha)\Big].
\ealigns
Using the $\cF_{T-2}$-measurability of $\sigma_{T-1}$ and $a_0>0$, we calculate further,
\balign\label{b2}
\IE_{T-2} \Big[ \sigma_T ~|~ Z_{T-1} > F_Z^{-1}(\alpha)\Big] 
&=  \IE_{T-2} \Big[ \sqrt{ a_0 + \sigma_{T-1}^2\big( a_1 Z_{T-1}^2  + b \big) } ~|~ Z^2_{T-1} > F_{Z^2}^{-1}(\alpha)\Big]\nonumber\\
&\geq \sigma_{T-1}\, \IE_{T-2} \Big[ \sqrt{ a_1 Z_{T-1}^2  + b } ~|~ Z^2_{T-1} > F_{Z^2}^{-1}(\alpha)\Big]\nonumber\\
&= \sigma_{T-1}  \,\frac{1}{1-\alpha}\int_{F_{Z^2}^{-1}(\alpha)}^\infty \sqrt{a_1 y + b} \, \ud F_{Z^2}(y).
\ealign
Hence, it follows that
\baligns
\tildeavar{T-2}(L_T) &\geq \ov\kappa \, 
\frac{1}{1-\alpha}\int_{\alpha}^1 \sqrt{a_1 F_{Z^2}^{-1}(y) + b} \, \ud y \, \sigma_{T-1} 
= \avarlow{T-2}(L_T).
\ealigns
This proves for $t=T-2$ that $\avarlow{T-2}(L_T)$ as in \eqref{eq:avar_lower} is a lower bound for 
$\tildeavar{T-2}(L_T)$.

Now assume that $\avarlow{s}(L_T) \leq \tildeavar{s}(L_T)$ holds true for $s=T-1,\ldots,t+1$. 
We show next that also
\[
\avarlow{t}(L_T) \leq \tildeavar{t}(L_T).
\]
To this end notice that 
\balign\label{eq:hilfe02}
\tildeavar{t}(L_T) = \avar{t}\big( \tildeavar{t+1}(L_T)  \big) \geq  \avar{t}\big( \avarlow{t+1}(L_T)  \big).
\ealign
Moreover, we have by the induction assumption
\baligns
&\avar{t}\big( \avarlow{t+1}(L_T)  \big) 
= \avar{t} \Big(\ov\kappa \,  \sigma_{T-1}\frac1{1-\al)} \int_\al^1 \sqrt{a_1F_{Z^2}^{-1}(y)+b} \ud y\Big)\\ 
&\quad = \ov\kappa \, \Big( \frac{1}{1-\alpha}  \int_{F_{Z^2}^{-1}(\alpha)}^\infty \sqrt{a_1 y + b} \, \ud F_{Z^2}(y)  \Big)^{T-t-2} \IE_t \Big[ \sigma_{t+2} ~|~ \sigma_{t+2} > \var{t}\big( \sigma_{t+2} \big) \Big]\\
&\quad = \ov\kappa \, \Big( \frac{1}{1-\alpha}  \int_{F_{Z^2}^{-1}(\alpha)}^\infty \sqrt{a_1 y + b} \, \ud F_{Z^2}(y) \Big)^{T-t-2} \IE_t \Big[ \sigma_{t+2} ~|~ Z^2_{t+1} > F_{Z^2}^{-1}(\alpha) \big)\Big],
\ealigns
where the last equality follows from Lemma~\ref{lemma:var_lemma}. 
Then by the same calculation which lead to \eqref{b2},
\baligns
\IE_t \Big[ \sigma_{t+2} ~|~ Z^2_{t+1} > F_{Z^2}^{-1}(\alpha) \big)\Big] 
&= \sigma_{t+1}  \,\int_{\alpha}^1 \sqrt{a_1 F^{-1}_{Z^2}(y) + b} \, \ud y,
\ealigns
this yields together with \eqref{eq:hilfe02},
\[
\tildeavar{t}(L_T) \geq \ov\kappa \, \left( \frac{ 1}{1-\alpha}\int_{\alpha}^1 \sqrt{a_1 F^{-1}_{Z^2}(y) + b} \, \ud y \right)^{T-t-1} \sigma_{t+1} = \avarlow{t}(L_T).
\]
\halmos

\subsubsection*{Acknowledgements}
We are grateful to one of the Reviewers and Marcin Pitera, who pointed out some errors in a previous version of this paper.

\bibliography{references}

\begin{thebibliography}{23}
\providecommand{\natexlab}[1]{#1}
\providecommand{\url}[1]{\texttt{#1}}
\expandafter\ifx\csname urlstyle\endcsname\relax
  \providecommand{\doi}[1]{doi: #1}\else
  \providecommand{\doi}{doi: \begingroup \urlstyle{rm}\Url}\fi

\bibitem[Artzner et~al.(1999)Artzner, Delbaen, Eber, and Heath]{artzneretal}
P.~Artzner, F.~Delbaen, J.-M. Eber, and D.~Heath.
\newblock Coherent measures of risk.
\newblock \emph{Mathematical Finance}, 9\penalty0 (3):\penalty0 203--228, 1999.

\bibitem[Balkema and de~Haan(1974)]{BdH}
A.~A. Balkema and L.~de~Haan.
\newblock Residual life time at great age.
\newblock \emph{Annals of Probability}, 2:\penalty0 792--804, 1974.

\bibitem[Bielecki et~al.(2015)Bielecki, Cialenco, and Pitera]{BCP}
T.~Bielecki, I.~Cialenco, and M.~Pitera.
\newblock A unified approach to time consistency of dynamic risk measures and
  dynamic performance measures in discrete time.
\newblock arXiv:1409.7028v2[math.PR], 2015.

\bibitem[Bion-Nadal(2009)]{bion}
J.~Bion-Nadal.
\newblock Time consistent dynamic risk processes.
\newblock \emph{Stochastic Processes and Their Applications}, 119\penalty0
  (2):\penalty0 633--654, 2009.

\bibitem[Cheridito and Kupper(2011)]{ck}
P.~Cheridito and M.~Kupper.
\newblock Composition of time-consistent dynamic monetary risk measures in
  discrete time.
\newblock \emph{International Journal of Theoretical and Applied Finance},
  14\penalty0 (01):\penalty0 137--162, 2011.

\bibitem[Cheridito and Stadje(2009)]{cheridito_stadje}
P.~Cheridito and M.~Stadje.
\newblock Time-inconsistency of {VaR} and time-consistent alternatives.
\newblock \emph{Finance Research Letters}, 6\penalty0 (1):\penalty0 40--46,
  2009.

\bibitem[Detlefsen and Scandolo(2005)]{detlefsen}
K.~Detlefsen and G.~Scandolo.
\newblock Conditional and dynamic convex risk measures.
\newblock \emph{Finance and Stochastics}, 9\penalty0 (4):\penalty0 539--561,
  2005.

\bibitem[Embrechts et~al.(1997)Embrechts, Kl{\"u}ppelberg, and Mikosch]{ekm}
P.~Embrechts, C.~Kl{\"u}ppelberg, and T.~Mikosch.
\newblock \emph{Modelling Extremal Events for Insurance and Finance}.
\newblock Springer, Berlin, 1997.

\bibitem[F{\"o}llmer and Knispel(2011)]{foellmerknispel}
H.~F{\"o}llmer and T.~Knispel.
\newblock Entropic risk measures: Coherence vs. convexity, model ambiguity and
  robust large deviations.
\newblock \emph{Stochastics and Dynamics}, 11\penalty0 (02n03):\penalty0
  333--351, 2011.

\bibitem[F{\"o}llmer and Penner(2006)]{foellmerpenner}
H.~F{\"o}llmer and I.~Penner.
\newblock Convex risk measures and the dynamics of their penalty functions.
\newblock \emph{Statistics \& Decisions}, 24\penalty0 (2006(1)):\penalty0
  61--96, 2006.

\bibitem[F{\"o}llmer and Schied(2002)]{foellmerschied_convex}
H.~F{\"o}llmer and A.~Schied.
\newblock Convex measures of risk and trading constraints.
\newblock \emph{Finance and Stochastics}, 6\penalty0 (4):\penalty0 429--447,
  2002.

\bibitem[F{\"o}llmer and Schied(2011)]{foellmerschied}
H.~F{\"o}llmer and A.~Schied.
\newblock \emph{Stochastic Finance: An Introduction in Discrete Time}.
\newblock de Gruyter, Berlin, extended edition, 2011.

\bibitem[Franq and Zakoian(2010)]{FZ}
C.~Franq and J.~Zakoian.
\newblock \emph{{GARCH} Models: Structure, Statistical Inference and Financial
  Applications}.
\newblock Wiley, Chichester, 2010.

\bibitem[McNeil and Frey(2000)]{mcneil_frey}
A.~McNeil and R.~Frey.
\newblock Estimation of tail-related risk measures for heteroscedastic
  financial time series: an extreme value approach.
\newblock \emph{Journal of Empirical Finance}, 7\penalty0 (3):\penalty0
  271--300, 2000.

\bibitem[McNeil et~al.(2005)McNeil, Frey, and Embrechts]{mfe}
A.~McNeil, R.~Frey, and P.~Embrechts.
\newblock \emph{Quantitative Risk Management: Concepts, Techniques, and Tools}.
\newblock Princeton Series in Finance. Princeton University Press, Princeton,
  NJ, 2005.

\bibitem[McNeil et~al.(2010)McNeil, Frey, and Embrechts]{mcneilfreyem}
A.~J. McNeil, R.~Frey, and P.~Embrechts.
\newblock \emph{Quantitative Risk Management: Concepts, Techniques, and Tools}.
\newblock Princeton University Press, 2010.

\bibitem[Peng(2004)]{pengnonlinear}
S.~Peng.
\newblock Nonlinear expectations, nonlinear evaluations and risk measures.
\newblock In M.~Frittelli and W.~Runggaldier, editors, \emph{Stochastic Methods
  in Finance}, pages 165--253. Springer, New York, 2004.
\newblock Lecture Notes in Mathematics, vol. 1856.

\bibitem[Penner(2007)]{penner}
I.~Penner.
\newblock \emph{Dynamic Convex Risk Measures: Time Consistency, Prudence, and
  Sustainability}.
\newblock PhD thesis, Humboldt-Universit{\"a}t zu Berlin, 2007.

\bibitem[Pickands(1975)]{pick}
J.~Pickands.
\newblock Statistical inference using extreme order statistics.
\newblock \emph{Annals of Statistics}, 3:\penalty0 119--131, 1975.

\bibitem[Resnick(1987)]{Resnick}
S.~Resnick.
\newblock \emph{Extreme Values, Regular Variation, and Point Processes}.
\newblock Springer, New York, 1987.

\bibitem[Riedel(2004)]{riedel}
F.~Riedel.
\newblock Dynamic coherent risk measures.
\newblock \emph{Stochastic Processes and Their Applications}, 112\penalty0
  (2):\penalty0 185--200, 2004.

\bibitem[Roorda and Schumacher(2007)]{RoordaSchumacher}
B.~Roorda and J.~Schumacher.
\newblock Time consistency conditions for acceptability measures, with an
  application to tail value at risk.
\newblock \emph{Insurance: Mathematics and Economics}, 40\penalty0
  (2):\penalty0 209 -- 230, 2007.

\bibitem[Weber(2006)]{Weber}
S.~Weber.
\newblock Distribution-invariant risk measures, information, and dynamic
  consistency.
\newblock \emph{Mathematical Finance}, 16\penalty0 (2):\penalty0 419--441,
  2006.

\end{thebibliography}
\bibliographystyle{abbrvnat}                        
\end{document}